\useunder{\uline}{\ul}{}
\newtheorem{theorem}{Theorem}
\newtheorem{lemma}{Lemma}
\newtheorem{assumption}{Assumption}
\newtheorem{proposition}{Proposition}
\newcommand\numberthis{\addtocounter{equation}{1}\tag{\theequation}}
\def\Kv{\boldsymbol K}
\def\Kbf{\mathbf K}
\newcommand{\Ac}{\mathcal{A}}
\newcommand{\Dc}{\mathcal{D}}
\newcommand{\Xc}{\mathcal{X}}
\newcommand{\Zc}{\mathcal{Z}}
\newcommand{\alphav}{\mbox{\boldmath{$\alpha$}}}
\newcommand{\betav}{\mbox{\boldmath{$\beta$}}}
\def\1v{\mathbf 1}
\def\0v{\mathbf 0}
\newcommand{\ind}[1]{\mathbbm{1}{\left[ {#1} \right] }}
\newcommand\indep{\protect\mathpalette{\protect\independenT}{\perp}}\def\independenT#1#2{\mathrel{\rlap{$#1#2$}\mkern2mu{#1#2}}}
\newcommand{\R}{\mathbb R}
\newcommand{\E}{\mathbb E}
\def\Pr{\mathrm P}
\newcommand{\Var}{\mathop{\rm Var}}
\newcommand{\argmin}{\operatornamewithlimits{argmin}}
\newcommand{\argmax}{\operatornamewithlimits{argmax}}
\newcommand{\norm}[1]{\|#1\|}
\newcommand{\set}[1]{\left\{#1\right\}}
\title{A Non-parametric Direct Learning Approach to Heterogeneous Treatment Effect Estimation under Unmeasured Confounding}
\author{
  Xinhai Zhang\thanks{Department of Mathematics and Statistics, State University of New York at Binghamton, USA. Email: xzhan222@binghamton.edu} \and
  Xingye Qiao\footnotemark[1]\thanks{Email: xqiao@binghamton.edu}
}
\begin{document}
\maketitle

\begin{abstract}
    In many social, behavioral, and biomedical sciences, treatment effect estimation is a crucial step in understanding the impact of an intervention, policy, or treatment.   
    In recent years, an increasing emphasis has been placed on heterogeneity in treatment effects, leading to the development of various methods for estimating Conditional Average Treatment Effects (CATE). These approaches hinge on a crucial identifying condition of no unmeasured confounding, an assumption that is not always guaranteed in observational studies or randomized control trials with non-compliance. In this paper, we proposed a general framework for estimating CATE with a possible unmeasured confounder using Instrumental Variables. We also construct estimators that exhibit greater efficiency and robustness against various scenarios of model misspecification. The efficacy of the proposed framework is demonstrated through simulation studies and a real data example.
\end{abstract}


\section{Introduction}
\label{sec:1}
In various domains, different subjects may exhibit different responses to the same set of treatments. The exploration of this heterogeneity in the effects resulting from exposure has gained substantial interest in recent years. For instance, inferring the heterogeneous effect of a medical treatment on clinical outcome can contribute to the development of personalized treatment \citep{cai2011analysis}. A similar concept has found application in personalized marketing as well \citep{chandra2022personalization}. The heterogeneity among subjects can be measured by the disparity in conditional mean outcomes given other covariates, typically referred to as the Conditional Average Treatment Effect (CATE). Another problem closely related to the heterogeneity in treatment effects is the optimal Individualized Treatment Regime (ITR), which is a decision rule that selects treatments for individuals to maximize the expected outcome.

There has been significant development in the literature regarding the estimation of CATE and the optimal ITR in the case of no unmeasured confounding. For example, Q-learning \citep{qian2011performance} models the conditional mean outcome under each treatment separately and the estimated CATE is constructed using the difference between the estimated conditional mean outcomes. The success of this method relies on the correct specification of the outcome models. To address this issue, direct learning (DL) \citep{tian2014simple, qi2018d} and robust direct learning (RD) \citep{meng2022augmented} models the conditional contrast between treatments directly, which has been shown to be more robust to model misspecification. Another strand of work approaches with tree-based or forest-based methods. \cite{hill2011bayesian} and \cite{green2012modeling} extended the Bayesian Additive Regression Tree (BART) method of \cite{chipman2010bart} for estimating heterogeneous treatment effect. \cite{athey2016recursive} proposed Causal Trees with an ``honest'' splitting approach, wherein the partitioning is constructed in one sample, and the treatment effects within each node are estimated using another sample. This methodology is subsequently adopted in Causal Forest \citep{wager2018estimation}, which extends the random forest algorithm to estimate heterogeneous treatment effects. On the other hand, optimal ITR estimation aims to determine the optimal decision rule for treatment assignment based on subjects' covariates to maximize the mean outcome. A significant line of work in the field involves transforming ITR estimation into a classification problem through the use of Inverse Probability Weighting (IPW). Notable contributions include Outcome Weighted Learning (OWL)  \citep{zhang2012estimating, zhao2012estimating} and Residual Weighted Learning (RWL) \citep{zhou2017residual}.

The aforementioned methods all rely on the key assumption of no unmeasured confounding to identify the heterogeneous treatment effect or the optimal ITR. However, this assumption is in most cases unverifiable (if not untrue) in observational studies or randomized controlled trials (RCT) with non-compliance. A well-known approach that takes into account the unmeasured confounding is the use of an instrumental variable (IV). A proper IV is usually a pre-treatment variable that is independent of any possible unmeasured confounder while correlated with the treatment. For example, in RCT with non-compliance, the random treatment assignment can be considered as an IV while the treatment received is considered the treatment variable. Here these two are clearly correlated since a subject will not receive the treatment if they are not assigned one, though the strength of the correlation may depend on other characteristics such as the education level of the subject.

There is a growing literature on estimating heterogeneous treatment effects or optimal ITR under unmeasured confounding using IV. \cite{imbens1994identification} identified and estimated the so-called Local Average Treatment Effect (LATE), restricted to the subgroup of the always-compliant population, with the help of an IV. More recently, machine learning methods like Double Machine Learning \citep{chernozhukov2018double} and Generalized Random Forest \citep{athey2019generalized} have shown their applicability and effectiveness in various settings including unmeasured confounding, particularly when used in conjunction with an IV. \cite{wang2018bounded} introduced two alternative assumptions on the unobserved confounders and the IV, which enable the identification of the Average Treatment Effect (ATE). They proposed an estimator that has the so-called multiply robustness property, which guarantees consistent estimate under three observed data models. These findings were incorporated into \cite{cui2021semiparametric} to obtain an optimal ITR estimation while accounting for unmeasured confounding. On the other hand, \cite{frauen2022estimating} utilized these findings for CATE estimation.

In this paper, we propose a new framework for estimating CATE using IV when there exist unmeasured confounders. This framework can be viewed as an extension of the Direct Learning method under unconfoundedness to the case that allows the existence of unmeasured confounding. We call the proposed method Direct Learning using Instrumental Variables (IV-DL). The proposed framework is easy to implement under many flexible learning methods. Additionally, we introduce several efficient and robust estimators by residualizing the outcome. These estimators have been demonstrated to be robust to multiple model misspecification scenarios. 

The rest of this paper is organized as follows. The notations and some related preliminaries are introduced in Section \ref{sec:2}. The proposed framework IV-DL is formally introduced in Section \ref{sec:3}. In Section \ref{sec:4} and \ref{sec:5}, we proposed efficient and robust estimators. In Section \ref{sec:6}, we conduct simulation studies and compare the performance with existing methods in the literature. A real data example is included in Section \ref{sec:7}. Section \ref{sec:8} concludes the paper with a discussion on possible future work. Proofs and additional simulations are provided in the Appendix.

\section{Notations and Preliminaries}
\label{sec:2}
Denote $A\in\Ac=\{+1,-1\}$ as the binary treatment, and $X\in\Xc\subseteq\R^p$ the pre-treatment covariates. We adapt the potential outcome framework \citep{rubin1974estimating} in causal inference and denote by $Y(a)\in\R$ the potential outcome that the subject would have obtained if the received treatment was $a\in\Ac$. The observed outcome is then given by $Y=Y(A)=Y(1)\ind{A=1}+Y(-1)\ind{A=-1}$. Denote by $U$ the unobserved confounder of the effect of $A$ on $Y$. Suppose we have access to a pre-treatment binary IV denoted by $Z\in\Zc=\set{+1,-1}$. Then the complete data consists of independent and identically distributed copies of $(Y,X,A,U,Z)$, even though only copies of $(Y,X,A,Z)$ are observed.

Our goal is to estimate the Conditional Average Treatment Effect (CATE), defined as
$\Delta(x)\triangleq\E[Y(1)-Y(-1)|X=x].$
As mentioned in Section \ref{sec:1}, most of the prior works are based on the core assumption of no unmeasured confounding:
\begin{assumption}[Unconfoundedness]
\label{asm:1}
    $Y(a)\indep A\vert X$ for $a=\pm 1$.
\end{assumption}
This assumption essentially implies that the observed covariates $X$ would suffice to account for the confounding of the effect of $A$ on $Y$, thereby excluding the presence of $U$. Under the above assumption of unconfoundedness, it can be easily verified that CATE is identified by
$\Delta(x)=\E[Y|A=1,X=x]-\E[Y|A=-1,X=x]$. Q-learning \citep{qian2011performance} models the two conditional mean outcomes separately and, as a result, is susceptible to model misspecification. Denote the propensity score for treatment $A$ as $\pi_A(a,x)=\Pr[A=a|X=x]$ for $a=\pm 1$. Direct Learning \cite{qi2018d,tian2014simple} proposed to directly model for the heterogeneous treatment effect, based on the observation that $\Delta(x)=\E[AY/\pi_A(A,X)|X=x]$.
In other words, one can obtain an estimate of CATE by regressing the modified outcome ${AY}/{\pi_A(A,X)}$ on $X$. Robust Direct Learning (RD) \cite{meng2022augmented} further extends this framework by residualizing the outcome using an estimate of the main effect, which is the average of the two conditional mean outcomes. This method demonstrates double robustness in the sense that it yields consistent estimation of CATE if either the propensity score or the main effect is correctly specified. Despite the success in RCT or observational studies, all the methods mentioned above rely on the unconfoundedness Assumption \ref{asm:1}. In the next section, we will introduce a general framework that directly models CATE using an IV approach when there exists unmeasured confounding.

\section{Direct Learning with Instrumental Variable Approach}
\label{sec:3}
In this paper, we look beyond Assumption \ref{asm:1}, and consider the existence of an unmeasured confounder $U$. To establish the identification of CATE in this setting, we approach with the use of a proper IV. We will start with the following assumptions seen in \cite{cui2021semiparametric}.
\begin{assumption} This assumption consists of six parts as follows:
\label{asm:2}
\begin{enumerate}[label={\textbf{\alph*.}},
  ref={\theassumption.\alph*}]
  \item \label{asm:2a}%
    $Y(z,a)\indep (Z,A)\vert X,U$ for $z,a=\pm 1$.
\item \label{asm:2b}%
    $Z\not\indep A\vert X$.
\item \label{asm:2c}%
    $Z\indep U\vert X$.
\item \label{asm:2d}%
    $Y(z,a)=Y(z',a)$ for $z,z',a=\pm 1$.
\item \label{asm:2e}%
    $0<\pi_Z(1,X)<1$ almost surely, where $\pi_Z(z,x)=\Pr[Z=z|X=x]$ for $z=\pm1$.
\item \label{asm:2f}
    $\E[A\vert Z=1,X,U]-\E[A\vert Z=-1,X,U]
        =\E[A\vert Z=1,X]-\E[A\vert Z=-1,X]$
\end{enumerate}
\end{assumption}
Here, $Y(z, a)$ represents the potential outcome that would be observed if a subject were exposed to treatment $a \in \Ac$, and the IV takes a value of $z \in \Zc$. Assumption \ref{asm:2a} rules out the existence of any other confounder, except for $X$ and $U$, for the joint effect of $Z$ and $A$ on the outcome $Y$. However, this unconfoundedness is hidden from the data collected, since $U$ is never observed. Assumptions \ref{asm:2b}-\ref{asm:2e} provides us with a well-defined IV. Assumption \ref{asm:2b} requires a correlation between the IV and the treatment given observed covariates. In many applications, a strong correlation is often necessary to ensure accurate inference in the estimation process. Assumption \ref{asm:2c} guarantees that the causal effect of $Z$ on $Y$ is not confounded given $X$; otherwise $Z$ suffers the same issue as $A$. Additionally, required by Assumption \ref{asm:2d}, the causal effect of $Z$ on $Y$ can only be mediated by the treatment $A$. In light of this assumption, we omit the argument $z$ in the potential outcome and denote the common value as $Y(a)$. Assumption \ref{asm:2e} implies that each subject has a positive chance of having either value of the IV. Assumption \ref{asm:2f} states that conditional on the measured covariates, there is no additive interaction between the unmeasured confounder and the IV when predicting the treatment. An example of the relationships between variables that satisfy Assumption \ref{asm:2} is presented in a directed acyclic graph in Figure \ref{fig:DAG}. Under Assumption \ref{asm:2}, we finally have identification of the CATE.

\begin{figure}[!th]
    \begin{center}
\centerline{\includegraphics[width=5cm]{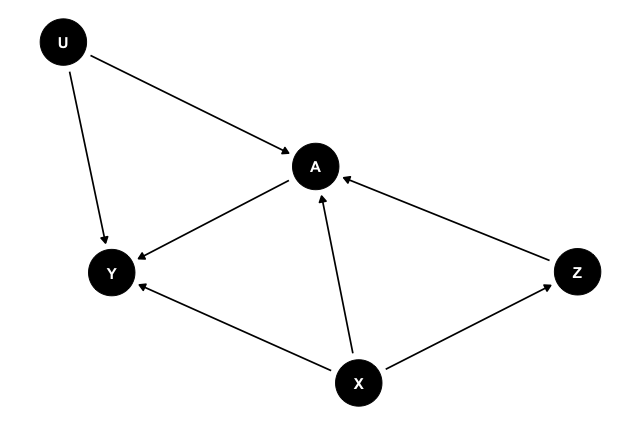}}
\end{center}
    \caption{A directed acyclic graph with unmeasured confounding and an IV}
    \label{fig:DAG}
\end{figure}

\begin{proposition}
    \label{prop:1}
    Under Assumptions \ref{asm:2}, the CATE can be identified by
    \begin{align*}
        \Delta(x)&=\frac{\E[Y\vert Z=1,X=x]-\E[Y\vert Z=-1,X=x]}{\Pr[A=1\vert Z=1,X=x]-\Pr[A=1\vert Z=-1,X=x]} \numberthis \label{eq:1}\\
        &=\E\left[\frac{ZY}{\delta(x)\pi_Z(Z,x)}\bigg\vert X=x\right], \numberthis \label{eq:2}
    \end{align*}
    where $\pi_Z(z,x)=\Pr[Z=z|X=x]$ and $\delta(x)=\Pr[A=1\vert Z=1,X=x]-\Pr[A=1\vert Z=-1,X=x]$ for any $x\in\Xc$.
\end{proposition}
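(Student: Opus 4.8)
The plan is to run the classical Wald-type instrumental-variable argument, but stratified by the covariates $X$ and with the unmeasured confounder $U$ integrated out at the end. Fix $x\in\Xc$ throughout. First I would use consistency together with the exclusion restriction (Assumption~\ref{asm:2d}), which lets us drop the $z$-argument from $Y(z,a)$, to write the observed outcome as
\[
Y=Y(A)=\ind{A=1}\,Y(1)+\ind{A=-1}\,Y(-1)=Y(-1)+\ind{A=1}\bigl(Y(1)-Y(-1)\bigr).
\]

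Second, I would compute the reduced-form contrast $\E[Y\mid Z=1,X=x]-\E[Y\mid Z=-1,X=x]$ by conditioning additionally on $U$ and using the tower property. For a fixed $z$, $\E[Y\mid Z=z,X=x]=\E\bigl[\,\E[Y\mid Z=z,X,U]\bigm|Z=z,X=x\,\bigr]$, and inside I invoke Assumption~\ref{asm:2a}: since $Y(1)$ and $Y(-1)$ are each independent of $(Z,A)$ given $(X,U)$, one gets $\E[\ind{A=1}Y(a)\mid Z,X,U]=\Pr[A=1\mid Z,X,U]\,\E[Y(a)\mid X,U]$ (note that only the marginal independences $Y(a)\indep A$ enter via linearity of expectation, so no joint-independence hypothesis on $(Y(1),Y(-1))$ is needed). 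Hence, writing $\tau(x,u):=\E[Y(1)-Y(-1)\mid X=x,U=u]$,
\[
\E[Y\mid Z=z,X,U]=\E[Y(-1)\mid X,U]+\Pr[A=1\mid Z=z,X,U]\,\tau(X,U).
\]
Integrating over $U$ and applying Assumption~\ref{asm:2c} ($Z\indep U\mid X$) to replace the law of $U$ given $(Z=z,X=x)$ by the law of $U$ given $X=x$, the $\E[Y(-1)\mid X,U]$ term becomes free of $z$ and cancels in the contrast, leaving
\[
\E[Y\mid Z=1,X=x]-\E[Y\mid Z=-1,X=x]=\E\Bigl[\bigl(\Pr[A=1\mid Z=1,X,U]-\Pr[A=1\mid Z=-1,X,U]\bigr)\,\tau(X,U)\Bigm|X=x\Bigr].
\]

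Third comes the decisive step: rewriting $\E[A\mid Z,X,U]=2\Pr[A=1\mid Z,X,U]-1$, Assumption~\ref{asm:2f} says the inner compliance contrast equals $\Pr[A=1\mid Z=1,X]-\Pr[A=1\mid Z=-1,X]=\delta(X)$, which is free of $U$; pulling it out of the $U$-integral and using $\E[\tau(X,U)\mid X=x]=\E[Y(1)-Y(-1)\mid X=x]=\Delta(x)$ gives $\E[Y\mid Z=1,X=x]-\E[Y\mid Z=-1,X=x]=\delta(x)\,\Delta(x)$. Assumption~\ref{asm:2b} is what guarantees $\delta(x)\neq 0$, so dividing yields \eqref{eq:1}. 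Finally, for \eqref{eq:2} I would use the inverse-probability-weighting identity: conditioning on $X=x$ and summing over $z=\pm1$, with positivity of $\pi_Z(\pm1,x)$ from Assumption~\ref{asm:2e},
\[
\E\!\left[\frac{ZY}{\pi_Z(Z,x)}\,\bigg|\,X=x\right]=\sum_{z=\pm1}\pi_Z(z,x)\cdot\frac{z}{\pi_Z(z,x)}\,\E[Y\mid Z=z,X=x]=\E[Y\mid Z=1,X=x]-\E[Y\mid Z=-1,X=x],
\]
so dividing through by $\delta(x)$ turns \eqref{eq:1} into \eqref{eq:2}.

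I expect the main obstacle to be the correct orchestration of Assumptions~\ref{asm:2a}, \ref{asm:2c} and \ref{asm:2f} in the second and third steps so that $U$ is genuinely marginalised out and, crucially, the compliance contrast $\delta(x)$ detaches from $U$; the ``no additive interaction'' Assumption~\ref{asm:2f} is doing the real work there. Everything else --- the consistency rewrite and the inverse-probability-weighting identity --- is routine bookkeeping, the only mild subtlety being to phrase the factorisations via linearity of expectation so that merely the marginal conditional independences of the potential outcomes from $A$ are invoked.
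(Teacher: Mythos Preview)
Your proposal is correct and follows essentially the same route as the paper's proof: condition on $U$, use Assumption~\ref{asm:2a} to factor the potential outcomes away from $A$, take the $Z=1$ minus $Z=-1$ contrast so the baseline term cancels, invoke Assumption~\ref{asm:2f} to pull the compliance contrast $\delta(x)$ outside the $U$-integration, and finish with the inverse-probability-weighting identity for \eqref{eq:2}. The only cosmetic difference is that the paper decomposes $2Y=(1+A)Y(1)+(1-A)Y(-1)$ and works with $\E[A\mid Z,X,U]$ directly, whereas you use $Y=Y(-1)+\ind{A=1}(Y(1)-Y(-1))$ and $\Pr[A=1\mid Z,X,U]$; you are also more explicit than the paper about where Assumption~\ref{asm:2c} enters to equate the law of $U$ given $(Z,X)$ with the law given $X$ alone.
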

The first equality \eqref{eq:1} was shown in \cite{wang2018bounded}, which means that the CATE is identified by the conditional Wald estimand. Equation \eqref{eq:2} reveals an interesting observation that we do not need the realized treatment $A$ as long as we have $\delta(x)$, which can be viewed as the conditional effect of the IV on the treatment given observed covariates. Hereafter, we denote the conditional means of $Y$ and $A$ by $\mu_{z}^Y(x)=\E[Y\vert Z=z,X=x]$ and $\mu_{z}^A(x)=\E[A\vert Z=z,X=x]$, respectively, for any $z\in\{-1,+1\}$ and $x\in\Xc$.\par

\subsection{Conditional Average Treatment Effect Estimation}
\label{sec:3.1}
In this section, we will introduce the IV-DL framework. Motivated by Equation \eqref{eq:2}, the next lemma offers a way to estimate $\Delta(x)$ using inverse propensity score of IV as weight.
\begin{lemma}
    \label{lma:1}
    Under Assumptions \ref{asm:2},
    \begin{align*}
    \Delta\in&\argmin_{f}\E\left[\frac{1}{\pi_Z(Z,X)}\bigg(\frac{2ZY}{\delta(X)}-f(X)\bigg)^2\right].
    \end{align*}
\end{lemma}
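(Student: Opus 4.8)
The plan is to read the right-hand side as a weighted least-squares problem and identify its minimizer in closed form. For any nonnegative weight $w(Z,X)$ and any response $R$ with finite relevant moments, conditioning on $X$ gives $\E[w(Z,X)(R-f(X))^2] = \E\big[\E[w(Z,X)(R-f(X))^2\mid X]\big]$, and for each fixed $x$ the inner expression is a quadratic in the scalar $f(x)$, minimized at $f^*(x) = \E[w(Z,X)R\mid X=x]\,/\,\E[w(Z,X)\mid X=x]$ (assuming the denominator is positive). So it suffices to take $w(Z,X)=1/\pi_Z(Z,X)$, $R = 2ZY/\delta(X)$, and check that this ratio equals $\Delta(x)$ $P_X$-almost surely.

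Then I would carry out the two short computations. For the denominator: since $Z\in\{+1,-1\}$ with $\Pr[Z=z\mid X=x]=\pi_Z(z,x)$, we have $\E[1/\pi_Z(Z,X)\mid X=x] = \pi_Z(1,x)\cdot\pi_Z(1,x)^{-1} + \pi_Z(-1,x)\cdot\pi_Z(-1,x)^{-1} = 2$, which is finite and strictly positive by Assumption \ref{asm:2e}. For the numerator: pulling out the $X$-measurable factor, $\E\big[\tfrac{1}{\pi_Z(Z,X)}\cdot\tfrac{2ZY}{\delta(X)}\,\big|\,X=x\big] = \tfrac{2}{\delta(x)}\,\E\big[\tfrac{ZY}{\pi_Z(Z,x)}\,\big|\,X=x\big]$, and by Equation \eqref{eq:2} of Proposition \ref{prop:1} the inner conditional expectation equals $\delta(x)\,\Delta(x)$, so the numerator is $2\Delta(x)$. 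Dividing, $f^*(x) = 2\Delta(x)/2 = \Delta(x)$, which shows $\Delta$ attains the minimum.

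There is no genuine obstacle here: the content is essentially Proposition \ref{prop:1} repackaged as a variational characterization, and the work is bookkeeping. The two points that deserve a sentence of care are (i) integrability — the objective is finite and the pointwise-minimization argument is valid provided $\E\big[(ZY/\delta(X))^2/\pi_Z(Z,X)\big]<\infty$, which I would note as a mild regularity condition together with $\pi_Z$ bounded away from $0$ and $1$ (Assumption \ref{asm:2e}) and $\delta(X)$ bounded away from $0$ (the relevance requirement behind Assumption \ref{asm:2b}); and (ii) the minimizer of a weighted least-squares problem is unique only up to $P_X$-null sets, which is precisely why the conclusion is stated as ``$\Delta\in\argmin$'' rather than as an equality.
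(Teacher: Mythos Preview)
Your proposal is correct and follows essentially the same route as the paper: both arguments condition on $X$, solve the resulting pointwise quadratic, and invoke Proposition~\ref{prop:1} to identify the minimizer as $\Delta(x)$. The only cosmetic difference is that the paper sets the gradient to zero and appeals to convexity (together with an auxiliary lemma formalizing the pointwise-to-global step), whereas you write the weighted least-squares minimizer directly as the ratio $\E[wR\mid X]/\E[w\mid X]$; these are the same computation.
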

Based on Lemma \ref{lma:1}, we can adopt many existing regression methods to obtain an estimate on CATE by regressing the modified outcome on the covariates, weighted by the propensity score for $Z$. Specifically, given the data $\{y_i,x_i,a_i,z_i\}^n_{i=1}$, an estimator $\hat\pi_Z$ of the propensity score function  and an estimator $\hat\delta$ of the effect of $Z$ on $A$,, the IV-DL estimate for $\Delta$ is given by
\begin{equation*}
    \hat f(x)= \argmin_{f\in\mathcal{F}}\frac{1}{n}\sum_{i=1}^n\frac{1}{\hat\pi_Z(z_i,x_i)}\left(\frac{2z_iy_i}{\hat\delta(x_i)}-f(x_i)\right)^2+\lambda\|f\|_\mathcal{F},
\end{equation*}
where $\mathcal{F}$ is a function space with norm $\|\cdot\|_\mathcal{F}$, and $\lambda\geq 0$ is the tuning parameter for the regularization term $\|f\|_\mathcal{F}$. To obtain $\hat\pi_Z$, we can fit a logistic regression of $Z$ on $X$ or a non-parametric model such as random forest. Since $\delta$ is the treatment effect of $Z$ on $A$, it is noteworthy that, under Assumption \ref{asm:2f}, estimation of $\delta$ can be viewed as a CATE estimation problem with unconfoundedness. In this case, $A$ may be viewed as a binary ``outcome'' and $Z$ a binary ``treatment''. Thus, we can adopt many existing CATE estimation methods such as Q-learning, DL, and Causal Forest.

The proposed framework allows a variety of learning methods to model the treatment effect $\Delta(x)$. For example, under the linear model, we may model $\Delta(x)=\tilde x^T\betav$ where the regression coefficients are $\betav$ and $\tilde x_i\triangleq(1,x_i^T)^T$. Then IV-DL estimator for $\betav$ is
\begin{equation*}
    \hat\betav=\argmin_{\betav\in\R^{p+1}}\frac{1}{n}\sum_{i=1}^n\frac{1}{\hat\pi_Z(z_i,x_i)}\bigg(\frac{2z_iy_i}{\hat\delta(x_i)}-\tilde x_i^T\betav\bigg)^2
\end{equation*}
and the CATE $\Delta(x)$ is estimated by $\hat f(x)=\tilde x^T\hat\betav$.

In high dimensional setting where $p$ is large, sparse regularization can be easily applied here because the optimization is essentially a weighted least square problem. For example, we can use Least Absolute Shrinkage and Selection Operator (LASSO) and the estimator of $\betav$ is given by
\begin{equation*}
\hat\betav^{lasso}=\argmin_{\betav\in\R^{p+1}}\frac{1}{n}\sum_{i=1}^n\frac{1}{\hat\pi_Z(z_i,x_i)}\bigg(\frac{2z_iy_i}{\hat\delta(x_i)}-\tilde x_i^T\betav\bigg)^2+\lambda\norm{\betav}_1,
\end{equation*}
where $\lambda>0$ is the tuning parameter for the $l_1$ penalty.

In practice, there is no guarantee that the true treatment effect follows a linear model. For a more complex model, we can adopt nonlinear methods such as Kernel Ridge Regression (KRR) and solve
\begin{equation*}
    \argmin_{\betav\in\R^n,\beta_0\in\R}\frac{1}{n}\sum_{i=1}^n\frac{1}{\hat\pi_Z(z_i,x_i)}\bigg(\frac{2z_iy_i}{\hat\delta(x_i)}-(\Kv_i^T\betav+\beta_0)\bigg)^2+\lambda\betav^T\Kbf\betav,
\end{equation*}
where $\Kv_i$ is the $i$-th column of the kernel matrix $\Kbf=(K(x_i,x_j))_{n\times n}$ and $K(\cdot,\cdot)$ is a kernel function. KRR might be computationally expensive when dealing with large datasets. In such cases, other machine learning methods capable of solving a weighted least squares problem can be considered. Examples include local regression, regression trees, random forests, and neural networks.
\subsection{Optimal Individualized Treatment Regime Estimation}
\label{sec:3.2}
In some domains, the optimal Individualized Treatment Regime (ITR) can be of interest. The goal here is to find a mapping $d:\Xc\to\Ac$ from a specific class $\Dc$ to maximizes the expected outcome:
$d^*\triangleq\argmax_{d\in\Dc}\E[Y(d(X))],$
where $Y(d(X))$ is the potential outcome that the subject $X$ obtained after receiving treatment $d(X)$, and $\E[Y(d(X))]$ is also known as the Value of the regime $d$.

ITR and CATE are closely related. For example, in the binary treatment setting, the CATE $\Delta$ is the difference between two conditional mean outcomes. Assuming greater values of outcome is preferred, then the sign of $\Delta$ will determine which treatment is optimal. It can be verified that $d^*(x)=\text{sign}(\Delta(x)).$ Therefore, we define the estimated optimal ITR using IV-DL as $\hat d(x)=\text{sign}(\hat f(x)),$ where $\hat f(x)$ may be any CATE estimator introduced in the last subsection.

\section{Efficient Estimators by Residualization}
\label{sec:4}
In the literature, considerable advancements have been made to enhance the efficiency and robustness of the CATE and optimal ITR estimation. To this end, residualization and augmentation are two common strategies. For example, in the IPW framework for optimal ITR estimation, \cite{zhou2017residual} and \cite{zhou2017augmented} proposed to replace the outcome by its residual $Y-\hat g(x)$ in estimation of the optimal regime, where $\hat g(x)$ is an estimate of the weighted average of the conditional mean outcomes. For the estimation of CATE, \cite{meng2022augmented} residualized the outcome by an estimate of the average of conditional mean outcomes. \cite{frauen2022estimating} proposed augmenting a preliminary estimate of CATE to enhance the robustness of the estimator.

In this section, we present the Robust Direct Learning using IV approach (IV-RDL), which involves residualizing the outcome in IV-DL to enhance both efficiency and robustness. We propose two ways of residualization, referred to as IV-RDL1 and IV-RDL2, respectively. They are shown to reduce the variance when estimating CATE. In Section \ref{sec:5}, we show that they have robustness properties when confronted with model misspecification for nuisance variables.

\subsection{Residualization using a Function of Covariates}
\label{sec:4.1}

We first consider residualizing the outcome by a function of the observed covariates only. Ideally, we would like to find a function $g:\Xc\to\R$ that can improve the efficiency of the estimation on CATE, while keeping it consistent. As shown in the following lemma, the consistency of the estimator is in fact preserved under a shift of $Y$ by any function of the observed covariates.
\begin{lemma}
    \label{lma:2}
    For any measurable $g :\Xc \to \R$ and any probability distribution for $(Y,X, A, Z)$
    \begin{align*}
        \Delta\in&\argmin_{f}\mathbb{E}\left[\frac{1}{\pi_Z(Z,X)}\left(\frac{2(Y-g(X))Z}{\delta(X)}-f(X)\right)^2\right]
    \end{align*}
\end{lemma}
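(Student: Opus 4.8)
The plan is to reduce Lemma~\ref{lma:2} to Lemma~\ref{lma:1} by showing that subtracting $g(X)$ from the outcome changes the population objective only by a quantity that does not depend on $f$, so the minimizing set is unaffected.

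First I would expand the square. Write $Q(f)\triangleq\E\left[\frac{1}{\pi_Z(Z,X)}\left(\frac{2YZ}{\delta(X)}-f(X)\right)^2\right]$ for the objective of Lemma~\ref{lma:1}, and let $R_g(f)$ denote the residualized objective in the statement. Expanding $\left(\frac{2YZ}{\delta(X)}-\frac{2g(X)Z}{\delta(X)}-f(X)\right)^2$ inside the expectation, collecting terms, and using $Z^2=1$, one gets
$$R_g(f)=Q(f)-2\,\E\left[\frac{1}{\pi_Z(Z,X)}\left(\frac{2YZ}{\delta(X)}-f(X)\right)\frac{2g(X)Z}{\delta(X)}\right]+\E\left[\frac{4g(X)^2}{\delta(X)^2\pi_Z(Z,X)}\right].$$
The last term is free of $f$, so it remains to show the cross term is too.

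The key computation is the cross term $\E\left[\frac{1}{\pi_Z(Z,X)}\left(\frac{2YZ}{\delta(X)}-f(X)\right)\frac{2g(X)Z}{\delta(X)}\right]$. Using $Z^2=1$ again, it equals $\E\left[\frac{4Yg(X)}{\delta(X)^2\pi_Z(Z,X)}\right]-\E\left[\frac{2g(X)f(X)}{\delta(X)}\cdot\frac{Z}{\pi_Z(Z,X)}\right]$. The first piece does not involve $f$. For the second, I would condition on $X$, pull out the $X$-measurable factor $\frac{2g(X)f(X)}{\delta(X)}$, and use that $\E\left[\frac{Z}{\pi_Z(Z,X)}\,\middle|\,X\right]=\sum_{z=\pm1}\pi_Z(z,X)\cdot\frac{z}{\pi_Z(z,X)}=(+1)+(-1)=0$ because $Z$ is binary. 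Hence the cross term contributes nothing depending on $f$, so $R_g(f)=Q(f)+C$ with $C$ a constant independent of $f$, and therefore $\argmin_f R_g(f)=\argmin_f Q(f)$, which contains $\Delta$ by Lemma~\ref{lma:1}.

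I do not anticipate a genuine obstacle; the whole argument is a short second-moment calculation. The one step that must be handled with care is the conditioning that annihilates the $f$-dependent part of the cross term: it rests exactly on the identity $\E\left[Z/\pi_Z(Z,X)\,\middle|\,X\right]=0$, which is where the binary structure of the instrument and the definition of $\pi_Z$ enter, and which is precisely what makes the $g$-residualization valid for an arbitrary measurable $g$ and an arbitrary distribution of $(Y,X,A,Z)$. As an alternative route, one could bypass Lemma~\ref{lma:1} and minimize $R_g$ directly: since $\E[1/\pi_Z(Z,X)\mid X=x]=2$, the minimizer over functions of $X$ is $f^\star(x)=\E\left[\frac{(Y-g(X))Z}{\delta(X)\pi_Z(Z,X)}\,\middle|\,X=x\right]$, and splitting off the $g$-term, which vanishes by the same identity, leaves $\E\left[\frac{ZY}{\delta(x)\pi_Z(Z,x)}\,\middle|\,X=x\right]=\Delta(x)$ by Proposition~\ref{prop:1}.
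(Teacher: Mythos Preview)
Your proposal is correct and follows essentially the same approach as the paper: expand the square, observe that the extra terms are either free of $f$ or vanish via the identity $\E\left[Z/\pi_Z(Z,X)\mid X\right]=0$, and then invoke Lemma~\ref{lma:1}. The only cosmetic difference is that the paper carries out the expansion conditionally on $X$ (and then lifts to the unconditional objective via an auxiliary lemma), whereas you work directly with the unconditional expectations and condition only at the one step where it is needed; your alternative route through Proposition~\ref{prop:1} is also valid.
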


Asymptotically, the variance of the estimator is related to the variance of the derivative of $[\pi_Z(Z,X)]^{-1}[2(Y-g(X))Z)Z\delta^{-1}(X)-f(X)]^2$, the weighted loss for each individual. Hence, it is natural to choose $g$ that minimize the variance of $[\pi_Z(Z,X)]^{-1}[2(Y-g(X))Z\delta^{-1}(X)-f(X)]$. See Appendix \ref{app:B} for a more detailed discussion using the linear model as an example. The following theorem gives us the minimizer.
\begin{theorem}
    \label{thm:1}
    Among all measurable $g :\Xc \to \R$, the following function minimize the variance of $\frac{1}{\pi_Z(Z,X)}\left(\frac{2(Y-g(X))Z}{\delta(X)}-f(X)\right)$:
    \begin{align*}
       g^*(x)&\triangleq\frac{1}{2}\mathbb{E}\left[\frac{Y}{\pi_Z(Z,X)}\bigg|X=x\right]=\frac{\mu_{1}^Y(x)+\mu_{-1}^Y(x)}{2}.
       \numberthis \label{eq:4}
    \end{align*}
\end{theorem}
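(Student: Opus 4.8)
The plan is to reduce the optimization over functions $g$ to a separate one–dimensional least–variance problem at each $x$, and then solve a quadratic. Write $W_g\triangleq \frac{1}{\pi_Z(Z,X)}\big(\frac{2(Y-g(X))Z}{\delta(X)}-f(X)\big)$ and decompose it as $W_g = W_0 - g(X)\,V$, where $W_0$ gathers the terms free of $g$ and $V\triangleq \frac{2Z}{\delta(X)\pi_Z(Z,X)}$. The first step is the observation that $\E[V\mid X]=\frac{2}{\delta(X)}\E[Z/\pi_Z(Z,X)\mid X]=\frac{2}{\delta(X)}\big(\pi_Z(1,X)\pi_Z(1,X)^{-1}-\pi_Z(-1,X)\pi_Z(-1,X)^{-1}\big)=0$, so $\E[W_g\mid X]=\E[W_0\mid X]$ does not depend on $g$ (and in fact equals $2(\Delta(X)-f(X))$ by the same weighting together with Proposition~\ref{prop:1}). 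By the law of total variance, $\Var(W_g)=\E[\Var(W_g\mid X)]+\Var(\E[W_g\mid X])$, and the second summand is $g$–free; hence it suffices to minimize $c\mapsto\Var(W_0-cV\mid X=x)$ over $c\in\R$ for almost every $x$. This is a quadratic in $c$ with leading coefficient $\Var(V\mid X=x)$, which is strictly positive: under Assumption~\ref{asm:2e} both $\pi_Z(1,x)$ and $\pi_Z(-1,x)$ are positive, so $V$ takes two distinct (opposite–sign) values with positive conditional probability. Thus the minimizer is unique and solves the normal equation $c^\ast=\Cov(W_0,V\mid X=x)/\Var(V\mid X=x)$, equivalently $\E[W_gV\mid X=x]=0$.

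The second step is to evaluate these conditional moments and simplify. Using $Z^2=1$, $\pi_Z(1,x)+\pi_Z(-1,x)=1$, and $\mu_z^Y(x)=\E[Y\mid Z=z,X=x]$, one obtains $\Var(V\mid X=x)=\frac{4}{\delta(x)^2}\big(\pi_Z(1,x)^{-1}+\pi_Z(-1,x)^{-1}\big)$ and $\Cov(W_0,V\mid X=x)=\frac{4}{\delta(x)^2}\big(\mu_1^Y(x)\pi_Z(1,x)^{-1}+\mu_{-1}^Y(x)\pi_Z(-1,x)^{-1}\big)-\frac{2f(x)}{\delta(x)}\big(\pi_Z(1,x)^{-1}-\pi_Z(-1,x)^{-1}\big)$, so $c^\ast=\big[\mu_1^Y(x)\pi_Z(1,x)^{-1}+\mu_{-1}^Y(x)\pi_Z(-1,x)^{-1}-\tfrac{f(x)\delta(x)}{2}(\pi_Z(1,x)^{-1}-\pi_Z(-1,x)^{-1})\big]\big/\big(\pi_Z(1,x)^{-1}+\pi_Z(-1,x)^{-1}\big)$. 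Now take $f=\Delta$, the true CATE (the value relevant for the asymptotic variance of the residualized estimator), and substitute the identity $\delta(x)\Delta(x)=\mu_1^Y(x)-\mu_{-1}^Y(x)$ from Proposition~\ref{prop:1}: expanding $(\mu_1^Y(x)-\mu_{-1}^Y(x))\big(\pi_Z(1,x)^{-1}-\pi_Z(-1,x)^{-1}\big)$, the mixed $\mu$–$\pi$ terms recombine and the numerator collapses to $\tfrac12(\mu_1^Y(x)+\mu_{-1}^Y(x))\big(\pi_Z(1,x)^{-1}+\pi_Z(-1,x)^{-1}\big)$, leaving $c^\ast=\tfrac12(\mu_1^Y(x)+\mu_{-1}^Y(x))=g^\ast(x)$. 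Finally $\tfrac12\E[Y/\pi_Z(Z,X)\mid X=x]=\tfrac12\sum_{z=\pm1}\pi_Z(z,x)\,\mu_z^Y(x)\,\pi_Z(z,x)^{-1}=\tfrac12(\mu_1^Y(x)+\mu_{-1}^Y(x))$, which gives both displayed expressions for $g^\ast$; $g^\ast$ is measurable since it is a ratio of regression functions, so it is an admissible choice and the pointwise argument above shows it is globally optimal.

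The only genuine work is the algebraic collapse in the second step: one must expand $(\mu_1^Y-\mu_{-1}^Y)(\pi_Z(1,\cdot)^{-1}-\pi_Z(-1,\cdot)^{-1})$ and verify that the cross products cancel against those already present in $\Cov(W_0,V\mid X=x)$ once $\delta\Delta=\mu_1^Y-\mu_{-1}^Y$ is used — routine but easy to mishandle. The remaining care is conceptual: recording that the passage from ``minimize $\Var(W_g)$'' to ``minimize $\Var(W_g\mid X=x)$ pointwise'' is legitimate (law of total variance, strict convexity of the conditional objective, and the explicit closed form of the minimizer), and making explicit that the relevant $f$ is the target $\Delta$, which is precisely what forces $\E[W_g\mid X]=0$ and decouples the conditional mean from the variance.
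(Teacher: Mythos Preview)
Your proposal is correct and follows essentially the same route as the paper's proof: both reduce the variance minimization to a pointwise quadratic in $g(x)$ (the paper via its Lemma~\ref{lmaa1} after writing the variance at $f=\Delta$ as an expected conditional second moment, you via the law of total variance and the observation $\E[V\mid X]=0$), then solve and simplify using $\delta(x)\Delta(x)=\mu_1^Y(x)-\mu_{-1}^Y(x)$. The only cosmetic difference is that the paper sets the derivative of the conditional second moment to zero directly, whereas you phrase the same first-order condition as the regression-coefficient formula $c^\ast=\Cov(W_0,V\mid X)/\Var(V\mid X)$; the resulting algebra and the role of $f=\Delta$ are identical.
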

There is an interesting interpretation of the optimal function $g^*$, which equals the average of $\mu_{1}^Y(x)$ and $\mu_{-1}^Y(x)$. Recall that Eq. \eqref{eq:1} states that CATE under unmeasured confounding is identified by the ratio of two contrasts, where the numerator happens to be $\mu_{1}^Y(x)-\mu_{-1}^Y(x)$. The residualization strategy amounts to shifting the outcome $Y$, and hence $\mu_{1}^Y(x)$ and $\mu_{-1}^Y(x)$ as well. Naturally, shifting both by their average will not affect their difference, but it will reduce the variance. A similar residualization was incorporated in RD under unconfoundedness \citep{meng2022augmented}, where the goal was to learn the contrast between conditional mean outcomes given the two treatments.

In practice, $g^*$ needs to be estimated before we can estimate the CATE. There are several approaches to obtain the estimate of $g^*$, denoted by $\hat g^*$. For example, we can take the average of estimated conditional mean outcomes, i.e., $\hat g^*(x)=(\hat\mu_1^Y(x)+\hat\mu_{-1}^Y(x))/2$. One can also regress $Y/(2\pi_Z(Z,X))$ on $X$, inspired by Eq. \eqref{eq:4}. Given $\hat g^*(x)$, the IV-RDL1 estimator for $\Delta$ is obtained by
\begin{equation*}
    \hat f_{g}(x_i)= \argmin_{f\in\mathcal{F}}\frac{1}{n}\sum_{i=1}^n\frac{1}{\hat\pi_Z(z_i,x_i)}\left(\frac{2(y_i-\hat g^*(x_i))z_i}{\hat\delta(x_i)}-f(x_i)\right)^2+\lambda\|f\|_\mathcal{F}.
\end{equation*}
In Section \ref{sec:5}, we will show that this estimator is robust against misspecification of either $g^*$ or $\pi_Z$, given that $\delta$ is correctly specified.

\subsection{Residualization using Covariates, Treatment, and IV}
\label{sec:4.2}
In this paper, we also consider an alternative way of residualizing the outcome by a function $h:(\Xc,\Ac,\Zc)\to\R$. Like IV-RDL1, the optimal choice is the function that minimizes the variance while maintaining the consistency of CATE estimation. Among all functions that still convey consistent CATE estimation, the following three equivalent functions minimize the variance of $[\pi_Z(Z,X)]^{-1}[2(Y-h(X,A,Z))Z\delta^{-1}(X)-f(X)]$.
\begin{align*}
    h^*_1(x,a,z)&=\mu^Y_1(x)+\Delta(x)\big(a-\mu^A_1(x)-z\delta(x)\big)/2\\
    h^*_2(x,a,z)&=\mu^Y_{-1}(x)+\Delta(x)\big(a-\mu^A_{-1}(x)-z\delta(x)\big)/2\\
    h^*_3(x,a,z)&=m^Y(x)+\Delta(x)\big(a-m^A(x)-z\delta(x)\big)/2
\end{align*}
where $m^Y(x)\triangleq(\mu_1^Y(x)+\mu_{-1}^Y(x))/2$ and $m^A(x)\triangleq(\mu_1^A(x)+\mu_{-1}^A(x))/2$. The technical details are provided in the Appendix \ref{app:C}. In practice, all these conditional means ($\mu_{-1}^Y$, $\mu_{1}^Y$, $\mu_{-1}^A$ and $\mu_{1}^A$) need to be estimated, together with estimations of $\pi_Z$ and $\delta$. Additionally, we need to obtain a preliminary estimate of CATE. The IV-RDL2 estimator is constructed by,
\begin{equation*}
    \hat f_{h}(x_i)= \argmin_{f\in\mathcal{F}}\frac{1}{n}\sum_{i=1}^n\frac{1}{\hat\pi_Z(z_i,x_i)}\left(\frac{2(y_i-\hat h^*(x_i,a_i,z_i))z_i}{\hat\delta(x_i)}-f(x_i)\right)^2+\lambda\|f\|_\mathcal{F},
\end{equation*}
where $\hat h^*$ is an estimator for one of $h^*_1$, $h^*_2$ and $h^*_3$.

\section{Robustness Properties}
\label{sec:5}
In this section, we investigate the robustness properties of IV-RDL1 and IV-RDL2. We start with the following theorem to demonstrate the double robustness property of the IV-RDL1 that residualizes the outcome by using $g(x)$.
\begin{theorem}
    \label{thm:2}
    Suppose Assumption \ref{asm:2} holds, and we have a consistent estimator of $\delta$, denoted by $\hat\delta$. Let $\tilde\pi_Z$ be a working model for $\pi_Z$, and $\tilde g$ be a working model for $g^*$. Then we have
    \begin{align*}
        \Delta\in \argmin_{f\in\{\Xc\to\R\}}\mathbb{E}\left[\frac{1}{\tilde\pi_Z(Z,X)}\left(\frac{(Y-\tilde g(X))Z}{\hat\delta(X)}-f(X)\right)^2\right]
    \end{align*}
    if either $\tilde\pi_Z(z,x)=\pi_Z(z,x)$ or $\tilde g(x)=g^*_1(x)$ almost surely.
\end{theorem}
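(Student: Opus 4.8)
The plan is to reduce the claim to a single property of weighted least squares and then verify two short conditional-expectation identities, one per case. Under standard integrability and since $1/\tilde\pi_Z(Z,X)>0$ a.s., the minimizer over all measurable $f:\Xc\to\R$ of $\mathbb{E}\big[\tilde\pi_Z(Z,X)^{-1}\big(W-f(X)\big)^2\big]$ is the weighted conditional mean $f^*(x)=\mathbb{E}[\tilde\pi_Z(Z,X)^{-1}W\mid X=x]\big/\mathbb{E}[\tilde\pi_Z(Z,X)^{-1}\mid X=x]$ (minimize pointwise in $f(x)$, then invoke the tower rule). Here $W$ is the modified outcome $2(Y-\tilde g(X))Z/\hat\delta(X)$ of Lemmas~\ref{lma:1}--\ref{lma:2} (the factor $2$ is suppressed in the displayed statement), and since $\hat\delta$ is consistent the population identity is read with $\delta$ in place of $\hat\delta$. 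It then suffices to show $f^*=\Delta$ in each of the two cases.

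If $\tilde\pi_Z=\pi_Z$, the conclusion is exactly Lemma~\ref{lma:2} applied with $g=\tilde g$, which already gives $\Delta\in\argmin_f\mathbb{E}\big[\pi_Z(Z,X)^{-1}(2(Y-\tilde g(X))Z/\delta(X)-f(X))^2\big]$ for every measurable $\tilde g$ and every law of $(Y,X,A,Z)$; equivalently, $\mathbb{E}[Z\,c(X)/\pi_Z(Z,X)\mid X]=0$ for any $c$, so the $\tilde g$-term drops from the numerator, and $\mathbb{E}[1/\pi_Z(Z,X)\mid X]=2$, which together with Eq.~\eqref{eq:1} of Proposition~\ref{prop:1} gives $f^*=\Delta$.

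If instead $\tilde g=g^*$, where $g^*$ is the minimizer from Theorem~\ref{thm:1} (written $g^*_1$ in the statement), I would evaluate the numerator and denominator of $f^*(x)$ by summing over $z=\pm1$ against the \emph{true} conditional law $\pi_Z(\cdot,x)$ of $Z\mid X=x$, while the weights use the \emph{working} model $\tilde\pi_Z$. With $\alpha_z(x):=\pi_Z(z,x)/\tilde\pi_Z(z,x)$ this yields $\mathbb{E}[1/\tilde\pi_Z(Z,X)\mid X=x]=\alpha_1(x)+\alpha_{-1}(x)$, $\mathbb{E}[ZY/\tilde\pi_Z(Z,X)\mid X=x]=\alpha_1(x)\mu_1^Y(x)-\alpha_{-1}(x)\mu_{-1}^Y(x)$, and $\mathbb{E}[Z g^*(X)/\tilde\pi_Z(Z,X)\mid X=x]=g^*(x)(\alpha_1(x)-\alpha_{-1}(x))$. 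Substituting $g^*(x)=(\mu_1^Y(x)+\mu_{-1}^Y(x))/2$, the cross terms recombine so that the numerator $\mathbb{E}[2Z(Y-g^*(X))/(\delta(X)\tilde\pi_Z(Z,X))\mid X=x]$ collapses to $\delta(x)^{-1}\big(\alpha_1(x)+\alpha_{-1}(x)\big)\big(\mu_1^Y(x)-\mu_{-1}^Y(x)\big)$; the common factor $\alpha_1+\alpha_{-1}$ cancels the denominator, leaving $f^*(x)=(\mu_1^Y(x)-\mu_{-1}^Y(x))/\delta(x)=\Delta(x)$ by Eq.~\eqref{eq:1}.

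I expect the one substantive step to be the recombination in the third paragraph: showing that residualizing by the \emph{average} $(\mu_1^Y+\mu_{-1}^Y)/2$ is precisely what lets the (possibly misspecified) weights $\alpha_z$ factor out of the numerator — this is the mechanism that produces double robustness. Everything else is bookkeeping, the crucial point being to keep the averaging law $\pi_Z$ distinct from the weighting model $\tilde\pi_Z$; the proof also needs the standing regularity conditions ($\tilde\pi_Z$ bounded away from $0$ and $1$, $\hat\delta$ bounded away from $0$, $\mathbb{E}[Y^2]<\infty$) so that the weighted loss and all conditional expectations are well defined, and the reading of ``consistent $\hat\delta$'' at the population level.
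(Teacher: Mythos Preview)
Your proposal is correct and follows essentially the same route as the paper: both arguments reduce to a pointwise (in $x$) optimization, introduce the ratios $\alpha_z(x)=\pi_Z(z,x)/\tilde\pi_Z(z,x)$, and in the $\tilde g=g^*$ case perform the identical recombination showing that the factor $\alpha_1+\alpha_{-1}$ appears in both numerator and denominator and cancels to leave $(\mu_1^Y-\mu_{-1}^Y)/\delta=\Delta$. The only cosmetic difference is that you write the minimizer directly as a weighted conditional mean, whereas the paper sets the gradient of the conditional loss to zero; the algebra and the invocation of Lemma~\ref{lma:2} for the $\tilde\pi_Z=\pi_Z$ case are the same.
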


Theorem \ref{thm:2} indicates that we will have a doubly robust estimator for $\Delta$ if either $\pi_Z$ or $g^*$ is correctly specified when $\delta$ is known or correctly specified. However, the requirement of a consistent estimate of $\delta$ would not pose a significant issue in practical application, since it is essentially a CATE estimation problem under no unmeasured confounding. A consistent estimator for $\delta$ can be found by implementing any state-of-the-art CATE estimation method in the literature.

For the IV-RDL2, there are more nuisance variables that need to be estimated. The next theorem shows that IV-RDL2 is robust to various scenarios of misspecified nuisance variables.

\begin{theorem}
    \label{thm:3}
    Suppose Assumption \ref{asm:2} holds. Let $\tilde\pi_Z$, $\tilde\delta$, $\tilde\mu^Y_1$, $\tilde\mu^Y_{-1}$, $\tilde\mu^A_Z$, $\tilde\mu^A_{-1}$, $\tilde m^Y$, $\tilde m^A$ and $\tilde\Delta$ be working models for $\pi_Z$, $\delta$, $\mu^Y_1$, $\mu^Y_{-1}$, $\mu^A_Z$, $\mu^A_{-1}$, $ m^Y$, $ m^A$ and $\Delta$, respectively. Denote $\tilde h_1$, $\tilde h_2$ and $\tilde h_3$ as chosen augmentation formulated according to $h_1^*$, $h_2^*$ and $h_3^*$ using working estimates. Then we have
    \begin{align*}
        \Delta\in \argmin_{f\in\{\Xc\to\R\}}\mathbb{E}\left[\frac{1}{\tilde\pi_Z(Z,X)}\left(\frac{(Y-\tilde h(X,A,Z))Z}{\hat\delta(X)}-f(X)\right)^2\right]
    \end{align*}
    if any one of the following condition is satisfied:
    (1) $\tilde\pi_Z=\pi_Z$ and $\tilde\Delta = \Delta$ almost surely, and $\tilde h$ can be any one of $\tilde h_1$, $\tilde h_2$ and $\tilde h_3$. (2) $\tilde\pi_Z=\pi_Z$ and $\tilde\delta = \delta$ almost surely, and $\tilde h$ can be any one of $\tilde h_1$, $\tilde h_2$ and $\tilde h_3$. (3) $\tilde\mu^Y_{1}=\mu^Y_{1}$, $\tilde\mu^a_{1}=\mu^A_{1}$ and $\tilde\Delta = \Delta$  almost surely, and $\tilde h=\tilde h_1$. (4) $\tilde\mu^Y_{-1}=\mu^Y_{-1}$, $\tilde\mu^a_{-1}=\mu^A_{-1}$ and $\tilde\Delta = \Delta$  almost surely, and $\tilde h=\tilde h_2$. (5) $\tilde m^Y=m^Y$, $m^A=m^A$ and $\tilde\Delta = \Delta$  almost surely, and $\tilde h=\tilde h_3$. (6) $\tilde m^Y=m^Y$, $m^A=m^A$ and $\tilde\delta = \delta$  almost surely, and $\tilde h=\tilde h_3$.
\end{theorem}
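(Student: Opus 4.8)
My plan is to reduce the claim to one pointwise ``normal equation'', evaluate its conditional expectation in closed form, and then read all six cases off that formula. Throughout I normalize the pseudo-outcome as in Lemma~\ref{lma:2}, writing $\xi = 2\big(Y-\tilde h(X,A,Z)\big)Z/\tilde\delta(X)$ and $w = 1/\tilde\pi_Z(Z,X)$, where $\tilde\delta$ is the working model of $\delta$ used in the denominator. As is implicit for any working propensity model, I assume $\tilde\pi_Z$ is bounded away from $0$ and $1$ and $\tilde\delta \neq 0$ a.s., so that $\E[w \mid X=x] \in (0,\infty)$ and the objective is finite. For each fixed $x$ the map $c \mapsto \E[w(\xi-c)^2 \mid X=x]$ is a strictly convex quadratic minimized at $\E[w\xi \mid X=x]/\E[w \mid X=x]$, so $\Delta \in \argmin_f \E[w(\xi-f(X))^2]$ if and only if
\begin{equation*}
  \E\!\left[\frac{1}{\tilde\pi_Z(Z,X)}\left(\frac{2\big(Y-\tilde h(X,A,Z)\big)Z}{\tilde\delta(X)} - \Delta(X)\right)\,\middle|\,X=x\right] = 0 \qquad \text{for a.e.\ }x. \tag{$\ast$}
\end{equation*}
Thus the whole theorem reduces to verifying $(\ast)$ under each of the six hypotheses.

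To evaluate the left side of $(\ast)$ I would iterate the conditional expectation over $Z$ and use that each $\tilde h_j$ is \emph{affine} in $(Y,A)$, so that only $\mu^Y_z(x) = \E[Y \mid Z=z,X=x]$ and $\mu^A_z(x) = \E[A \mid Z=z,X=x]$ appear — no cross moment of $Y$ and $A$ is needed. Writing $q_z = \Pr[Z=z \mid X=x]/\tilde\pi_Z(z,x)$, $S_0 = q_1 + q_{-1} = \E[\tilde\pi_Z^{-1} \mid X=x]$ and $S_1 = q_1 - q_{-1} = \E[Z\tilde\pi_Z^{-1} \mid X=x]$, I expect a direct expansion and regrouping of terms to yield
\begin{equation*}
  \text{LHS of }(\ast) = \frac{S_0\,(\Delta-\tilde\Delta)(\delta-\tilde\delta)}{\tilde\delta} + \frac{2S_1}{\tilde\delta}\left[(m^Y-\tilde\nu^Y) - \frac{\tilde\Delta}{2}(m^A-\tilde\nu^A)\right],
\end{equation*}
where $(\tilde\nu^Y,\tilde\nu^A)$ equals $(\tilde\mu^Y_1,\tilde\mu^A_1)$, $(\tilde\mu^Y_{-1},\tilde\mu^A_{-1})$ or $(\tilde m^Y,\tilde m^A)$ according as $\tilde h = \tilde h_1$, $\tilde h_2$ or $\tilde h_3$. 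The only nontrivial inputs will be the two identities
\begin{equation*}
  \mu^Y_1(x) - \mu^Y_{-1}(x) = \Delta(x)\,\delta(x), \qquad \mu^A_1(x) - \mu^A_{-1}(x) = 2\,\delta(x),
\end{equation*}
the first being exactly Eq.~\eqref{eq:1} of Proposition~\ref{prop:1}, and the second the elementary identity $\mu^A_z = 2\Pr[A=1\mid Z=z,X] - 1$ together with the definition of $\delta$. These collapse the ``leading'' $S_0$-block to $(\Delta-\tilde\Delta)\delta$, and they also give $m^Y - \mu^Y_{\pm1} = \mp\Delta\delta/2$ and $m^A - \mu^A_{\pm1} = \mp\delta$, which is what makes the bracket in the $S_1$-term vanish once $\tilde\nu^Y,\tilde\nu^A$ are correctly specified \emph{and} $\tilde\Delta = \Delta$; for $\tilde h_3$ the bracket is trivially zero as soon as $\tilde m^Y = m^Y$ and $\tilde m^A = m^A$.

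Granting this closed form, the six cases fall out at once: since $S_0 > 0$, the first summand vanishes iff $\tilde\Delta = \Delta$ or $\tilde\delta = \delta$, while the second vanishes iff $S_1 = 0$ (equivalently $\tilde\pi_Z = \pi_Z$) or the bracket is zero, which by the previous step holds for $\tilde h_3$ when $(\tilde m^Y,\tilde m^A) = (m^Y,m^A)$ and for $\tilde h_1$ (resp.\ $\tilde h_2$) when $\tilde\Delta = \Delta$ together with $(\tilde\mu^Y_1,\tilde\mu^A_1) = (\mu^Y_1,\mu^A_1)$ (resp.\ with $-1$ in place of $1$). Pairing one way of killing each summand recovers exactly conditions (1)--(6): (1)--(2) take $\tilde\pi_Z = \pi_Z$ and then $\tilde\Delta = \Delta$ or $\tilde\delta = \delta$; (3)--(5) take $\tilde\Delta = \Delta$ together with, respectively, $(\tilde\mu^Y_1,\tilde\mu^A_1)$, $(\tilde\mu^Y_{-1},\tilde\mu^A_{-1})$ or $(\tilde m^Y,\tilde m^A)$ correct and $\tilde h = \tilde h_1, \tilde h_2, \tilde h_3$; and (6) takes $\tilde\delta = \delta$ with $(\tilde m^Y,\tilde m^A)$ correct and $\tilde h = \tilde h_3$. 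Case~4 is moreover just the image of case~3 under $z \mapsto -z$, which exchanges $h^*_1 \leftrightarrow h^*_2$ and leaves $(\ast)$ invariant; and Theorem~\ref{thm:2} is the degenerate case $\tilde h = \tilde g(X)$, $\tilde\delta = \delta$, for which the same computation gives $\text{LHS of }(\ast) = 2S_1(g^* - \tilde g)/\delta$. I expect the only genuine work — and the one place a stray sign or factor of $2$ could slip in — to be the term bookkeeping in the middle step; everything downstream is exact population algebra, needing no large-sample or regularity argument beyond the integrability already implicit in the statement.
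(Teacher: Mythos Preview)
Your proposal is correct and follows the same skeleton as the paper's proof: reduce to the pointwise first-order condition at $f=\Delta$, iterate the conditional expectation over $Z$, and invoke the two identities $\mu^Y_1-\mu^Y_{-1}=\Delta\,\delta$ and $\mu^A_1-\mu^A_{-1}=2\delta$ coming from Proposition~\ref{prop:1} and the definition of $\delta$. Where you differ is in packaging: the paper treats the six scenarios (and the three choices $\tilde h_1,\tilde h_2,\tilde h_3$) by separate algebraic computations of the gradient, while you unify all of them into the single closed form
\[
\frac{S_0\,(\Delta-\tilde\Delta)(\delta-\tilde\delta)}{\tilde\delta}\;+\;\frac{2S_1}{\tilde\delta}\Bigl[(m^Y-\tilde\nu^Y)-\tfrac{\tilde\Delta}{2}(m^A-\tilde\nu^A)\Bigr],
\]
with $(\tilde\nu^Y,\tilde\nu^A)$ indexing the choice of $\tilde h_j$, and then read every case off by asking which factor of each summand is killed. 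I checked your formula against the direct expansion and it is exactly right; in particular, the bracket for $\tilde h_1$ under $(\tilde\mu^Y_1,\tilde\mu^A_1)=(\mu^Y_1,\mu^A_1)$ simplifies to $\tfrac{\delta}{2}(\tilde\Delta-\Delta)$, which is why case~(3) genuinely needs $\tilde\Delta=\Delta$ on top of the conditional means, and symmetrically for case~(4). Your route buys a more transparent explanation of \emph{why} these six are the minimal combinations (each pairs one way of annihilating the $S_0$-term with one way of annihilating the $S_1$-term), at no extra cost; the paper's case-by-case is more explicit but hides this structure.
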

Theorem \ref{thm:3} summarizes in total six cases of the minimal combination of correctly specified nuisance variables in order to have a consistent estimate of CATE. The three choices of residualization functions possess robustness against different scenarios. In the first two scenarios, obtaining a consistent estimate of CATE is guaranteed as long as we correctly specify $\pi_Z$ and either $\Delta$ or $\delta$. This consistency holds irrespective of the choice of the three $\tilde h$ functions. In practice, the second scenario may be particularly accessible, especially when $\pi_Z$ is known. The other scenarios are less likely to be verified in practice and therefore requires more domain knowledge of the data structure.
Specifically, scenarios (3)-(5) requires the corresponding set of conditional means to be correctly specified as well as the preliminary $\Delta$. Lastly, in scenario (6), when $\delta$ and the averages of conditional means are correctly specified, the IV-RDL2 will also provide a consistent estimate of CATE.

While working on this paper, we encountered unpublished work by \cite{frauen2022estimating} that is similar to our IV-RDL2 estimator. Inspired by \cite{wang2018bounded}, \citeauthor{frauen2022estimating} introduced the MRIV framework, which is a two-step process. First, a preliminary estimator of CATE and nuisance estimators of $\delta$, $\pi_Z$, $\mu_{-1}^Y$ and $\mu_{-1}^A$ are obtained. Then, a pseudo-outcome is created by augmenting the preliminary CATE with the nuisance estimates, and the final CATE estimator is obtained by regressing the pseudo-outcome on the covariates. As shown in \cite{wang2018bounded}, this estimator is robust against model misspecification of the nuisance variables in three of the six scenarios in Theorem \ref{thm:3} (scenarios (1), (2), and (4)). Our numerical studies have shown that our proposed IV-DL framework performs better than the MRIV method.

\section{Simulation Study}
\label{sec:6}
In this section, we present the results of the simulation study conducted to assess the performance of the proposed IV-DL framework. We compared the proposed method with Bayesian additive regression trees \citep[BART;][]{chipman2010bart}, robust direct learning \citep[RD;][]{meng2022augmented}, causal forest with IV approach \citep[CF;][]{athey2019generalized}, MRIV method \citep{frauen2022estimating}, and weighted learning with IV approach \citep[IPW-MR;][]{cui2021semiparametric}.

\subsection{Simulation Settings}
We begin by introducing the data-generating mechanism. The covariates, denoted as $X=(X_1,X_2,X_3,X_4,X_5)$, were generated from uniform distribution with $X_i\sim Unif(-1,1)$ for $i=1,\ldots,5$. We followed \cite{cui2021semiparametric} and generated the treatment $A$ under logistic model with probability for success: $\mathbb{P}(A=1|X,Z,U)=\text{expit}\{2X_1+2.5Z-0.5U\},$
where the instrumental variable $Z$ was a Bernoulli random variable with probability $1/2$ and $U$ was the unobserved confounder that followed Bridge distribution with parameter $\phi=1/2$. By the results from \cite{wang2003matching}, the above usage of Bridge distribution will guarantee that the marginal distribution $f(A|X,Z)$ can be modeled directly by logistic regression. In other words, there exists some vector $\alphav$ such that $logit\{\mathbb{P}(A=1|X,Z)\}=\alphav^T(1,X,Z)$.

The outcome $Y$ was generated in two different settings corresponding to linear and non-linear models of the true CATE:
\begin{enumerate}
    \item $Y=h(X)+q(X)A+0.5U+\epsilon$
    \item $Y=h(X)+\{\exp(q(X))-1\}A+U+\epsilon$
\end{enumerate}
where the error term $\epsilon$ follows $N(0,1)$. Functions $h(X)$ and $q(X)$ are defined as follows:
\begin{align*}
    h(X)&=0.5+0.5X_1+0.8X_2+0.3X_3-0.5X_4+0.7X_5\\
    q(X)&=0.2-0.6X_1-0.8X_2.
\end{align*}
In Setting 1, the true CATE is $2q(x)$, which is linear in $x$. In Setting 2, the true CATE is $2(exp(q(x))-1)$, which is nonlinear. The sample size for each setting was 500 and the simulation was repeated 100 times. An independent sample of size 5000 was used to evaluate the performance of different methods. The proposed methods were implemented according to Sections \ref{sec:3} and \ref{sec:4} with $\hat\delta(X)$ estimated by causal forest (``grf'' package) and the other nuisance variables estimated by random forest. For methods that require to estimate the same nuisance variable, they shared the same copies of nuisance estimates.

\subsection{Numerical Results}
We compared all methods based on three performance metrics in the testing sample: the correct classification rate by the estimated ITR (AR); the value function evaluated at the estimated ITR (Value); the mean squared error of the estimated CATE (MSE). Table \ref{tab:1} reports the mean and standard error of these three evaluation metrics over 100 replications for different methods in the two settings.

\begin{table}[h]
\begin{center}
\footnotesize
\caption{Simulation results: mean$\times 10^{-2}$(SE$\times10^{-2}$). IPW-MR: the multiply robust weighted learning; BART: Bayesian additive regression trees; RD: robust direct learning; CF: causal forest. The empirical maximum value is 0.998 for setting 1 and 1.01 for setting 2.}
\begin{tabular}{l|l||r|r||r|r|r||r|r|r}
\hline
                   &       & BART      & RD        & IPW-MR           & CF        & MRIV       & IV-DL     & IV-RDL1            & IV-RDL2         \\ \hline
\multirow{3}{*}{1} &MSE   & 121(3.4)  & 97.6(2.9) & NA              & 89.6(1.8) & 66.3(2.3) & 55.5(3.8) & \textbf{40.5(2.9)} & {\ul 42.5(3.3)} \\ 
&AR    & 66.3(0.7) & 71.4(0.5) & {\ul 84.1(0.7)} & 79.1(0.7) & 78.3(0.6) & 81.4(1)   & \textbf{84.6(0.8)} & 83.7(1)         \\
                   &Value & 75.4(0.8) & 81.6(0.5) & 84.1(0.7)       & 85.4(0.7) & 84.5(0.7) & 87.1(1)   & \textbf{89.9(0.9)} & {\ul 88.9(1.1)} \\
                   \hline
\multirow{3}{*}{2} & MSE   & 449(11.1) & 397(9.8)  & NA              & 149(2.9)  & 150(5.6)  & 164(8.9)  & \textbf{140(7.7)}  & {\ul 142(7.4)}  \\
& AR    & 57.6(0.6) & 60.8(0.7) & 55.5(0.3)       & 70.1(1)   & 68.8(0.9) & 77.1(0.9) & \textbf{77.9(0.8)} & {\ul 77.2(0.9)} \\
                   & Value & 53.1(1.6) & 61(1.6)   & 81.6(0.2)       & 83.5(1.2) & 81.6(1)   & 89.9(1)   & \textbf{90.6(0.9)} & {\ul 90(1)}     \\
                    \hline
\end{tabular}
    \label{tab:1}
    \vspace*{0.7cm}
    
\end{center}
\end{table}
Among the methods implemented, BART and RD rely on the unconfoundedness assumption and therefore fail to identify CATE when there is unobserved confounding. Both IPW-MR and CF make use of the IV to take unmeasured confounding into account. IPW-MR had fine performances on estimating ITR and maximizing the value. However, it was not designed to estimate CATE. CF performs slightly worse than IPW-MR in terms of AR and value in Setting 1, despite offering a CATE estimation. Its performance is more competitive compared to IPW-MR in Setting 2. Our proposed methods showed superior performances on all the metrics. In particular, IV-RDL1, which residualized the outcome using averages of the estimated conditional means, outperformed all the methods in both settings. IV-RDL2 had a more complicated residualization, and achieved the second-best performance (but still fairly close to IV-RDL1). Even the unresidualized IV-DL performed better than other methods in most of the metrics. Additional simulation results on testing the robustness of the proposed framework is reported in Appendix \ref{app:D}.

\section{Data Analysis}
\label{sec:7}
In this section, following \cite{angrist1998children}, we study the causal effect of child-rearing on a mother’s labor-force participation, using a sample of married mothers with two or more children from the U.S. 1980 census data (80PUMS).
Assuming the sex of children is random, ``first two children mixed sex or not'' becomes a suitable instrumental variable for the causal effect of having a third child on a mother's labor force participation. \citeauthor{angrist1998children} showed that having a third child reduces women’s labor force participation on average. Our goal is to investigate heterogeneity among families, offering personalized insights on the decision to have a third child and its impact on employment opportunities. We used a dataset of 478,005 subjects with at least two children. The outcome, $Y$, represents whether the mother was employed in the year preceding the census. The treatment, $A$, indicates whether the mother had three or more children at the census time, and the instrumental variable, $Z$, indicates whether the first two children were of the same sex. We considered five covariates, $X$: mother’s age at first birth, age at census time, years of education, race, and the father's income.

\begin{figure}[!ht]
      \begin{minipage}{0.45\linewidth}
              \centering
    \includegraphics[width = \linewidth]{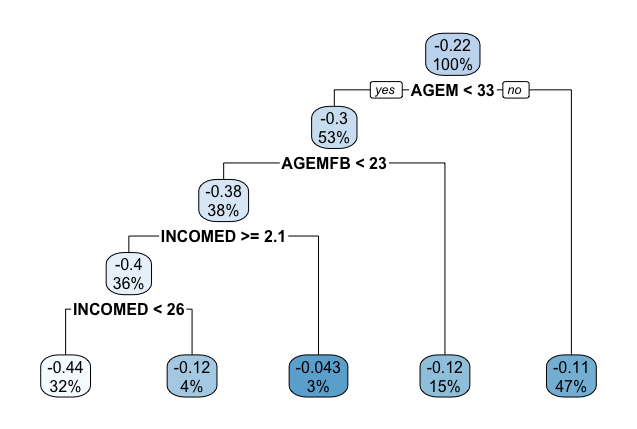}
    \caption{Tree splitting of estimated CATE on covariates. The five leaf nodes shall be numbered 1--5.}
    \label{fig:1}
      \end{minipage}
      \hspace{0.05\linewidth}
      \begin{minipage}{0.45\linewidth}
              \centering
    \includegraphics[width = \linewidth]{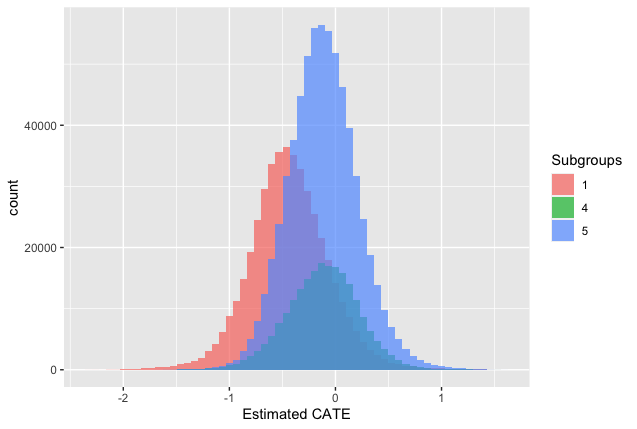}
    \caption{Histograms of estimated CATE in three majority subgroups}
    \label{fig:2}
      \end{minipage}
\end{figure}

We used the random forest algorithm for both the implementation of the proposed method and the estimation of the nuisance variables $\mu^Y_z$, $\mu^A_z$, $\delta$, and $\pi_Z$. The preliminary CATE estimator was formulated according to Eq. \ref{eq:1} with plug-in estimates on the conditional means. To identify subgroups with distinct treatment effects, we used the estimated CATE as the response to construct a regression tree, shown in Figure \ref{fig:1}. The splits occurred at the mother's age at census (33), age at first birth (23), and father's income (\$2.1k/year and \$26k/year). By investigating the five subgroups (32\%, 4\%, 3\%, 15\%, and 47\% of the sample), labeled as groups 1--5, we have made the following observations.
First, older mothers are more likely to work after having a third child (subgroups 4 and 5 show a larger estimated treatment effect). Second, younger mothers with very low-income fathers (subgroup 3) tend to stay in the labor force after the third child.
Lastly, younger mothers are more likely to stop working if their husband's income is between \$2.1k and \$26k/year (subgroups 1-3).
Figure \ref{fig:2} displays the histogram of estimated CATE for the three majority groups (1, 4, and 5). The estimated CATE for group 1 is overall smaller than for groups 4 and 5. We also constructed 3-dimensional scatter plots based on the three splitting variables for a more detailed look at the heterogeneity (shown in Appendix \ref{app:E}).

\section{Conclusions}
\label{sec:8}
In this paper, we proposed a new framework to estimate CATE under unmeasured confounding by using an instrumental variable. Under the proposed framework, the estimation procedure boils down to solving a weighted least square problem, which can be tackled with any modern statistical or machine learning method. We also constructed two robust estimators by residualizing the outcome, which are shown to be more efficient and robust to model misspecification on nuisance variables. Numerical studies have shown very competitive performance for our proposed methods.

A potential extension of our work involves using IV to estimate treatment effects for multi-arm and continuous treatments, with the challenge lying in the generalization of Assumption \ref{asm:2f}. Another avenue is to incorporate deep neural networks to make use of their rich expressiveness for data distribution. However, the empirical performance and theoretical properties need to be formally studied. One notable limitation is the issue of extreme weights, which can arise during the estimation process and potentially lead to instability and biased results. Addressing this limitation is crucial for improving the reliability and accuracy of our method.

\bibliographystyle{plainnat}
\bibliography{Reference}

\appendix

\section{Proofs}
\label{app:A}

\begin{proof}[Proof of Proposition \ref{prop:1}]
For any $z\in\{-1,+1\}$, we have
\begin{align*}
        &\E[2Y\vert Z=z,X]\\
        =~&\E_U\big(\E[2Y\vert Z=z,X, U]\big)\\
        =~&\E_U\big(\E[Y(1+A)\vert Z=z,X, U]\big)+\E_U\big(\E[Y(1-A)\vert Z=z,X, U]\big)\\
        =~&\E_U\big(\E[Y(1)(1+A)\vert Z=z,X, U]\big)+\E_U\big(\E[Y(-1)(1-A)\vert Z=z,X, U]\big)\\
        =~&\E_U\big(\E[Y(1)+Y(-1)\vert Z=z,X, U]\big)+\E_U\big(\E[AY(1)-AY(-1)\vert Z=z, X, U]\big)\\
        =~&\E_U\big(\E[Y(1)+Y(-1)\vert X, U]\big)+\E_U\big(\E[Y(1)-Y(-1)\vert X, U]\E[A\vert Z=z, X, U]\big)
\end{align*}
Evaluate the above equality at $z=1$ and $z=-1$, and take the difference. Then we have
\begin{align*}
   &2\E[Y\vert Z=1,X]-2\E[Y\vert Z=-1,X]\\
   =~&\E_U\Big[\E[Y(1)-Y(-1)\vert X, U]\big(\E[A\vert Z=1,X, U]-\E[A\vert Z=-1,X, U]\big)\Big] 
\end{align*}
Based on Assumption \ref{asm:2f}, we have
\begin{equation*}
    \E[A\vert Z=1,X, U]-\E[A\vert Z=-1,X, U]=\E[A\vert Z=1,X]-\E[A\vert Z=-1,X].
\end{equation*}
Combining the above two, we have
\begin{align*}
&\E[Y\vert Z=1,X]-\E[Y\vert Z=-1,X]\\
=~&\frac{1}{2}\E_U\big(\E[Y(1)-Y(-1)\vert X, U]\big)\big(\E[A\vert Z=1,X]-\E[A\vert Z=-1,X]\big)\\
=~&\E[Y(1)-Y(-1)\vert X]\big(\Pr[A\vert Z=1,X]-\Pr[A\vert Z=-1,X]\big)
\end{align*}
The above equality is equivalent to Equation \eqref{eq:1}. On the other hand, for any $x\in \mathcal{X}$,
\begin{align*}
    &\E\bigg[\frac{ZY}{\pi_Z(Z,X)}\bigg\vert X=x\bigg]\\
    =~&\pi_Z(1,x)\E\bigg[\frac{Y}{\pi_Z(1,X)}\bigg\vert Z=1,X=x\bigg]+\pi_Z(-1,x)\E\bigg[\frac{-Y}{\pi_Z(-1,X)}\bigg\vert Z=-1,X=x\bigg]\\
    =~&\E[Y\vert Z=1,X=x]-\E[Y\vert Z=-1,X=x]
\end{align*}
Dividing both sides of the above equality by $\delta(x)=\Pr[A\vert Z=1,X=x]-\Pr[A\vert Z=-1,X=x]$ yields Equation \eqref{eq:2}.
\end{proof}

\begin{lemma}
\label{lmaa1}
    Let $\ell(X,f)=\E[Q(X,f)|X]$ and $L(f)=\E\ell(X,f)$. Denote $f^*\in \argmin_{f}\ell(X,f)$. Then $f^*\in \argmin_{f}L(f)$.
\end{lemma}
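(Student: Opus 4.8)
The plan is to prove this by the completely elementary principle that pointwise (conditional) optimality implies optimality in expectation, relying only on monotonicity of the expectation operator.

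First I would unpack what $f^*\in\argmin_{f}\ell(X,f)$ means. Since $\ell(X,f)=\E[Q(X,f)\mid X]$ is a $\sigma(X)$-measurable random variable, the hypothesis should be read as: for almost every value $x$ in the support of $X$, and for every competitor $f$ in the function class under consideration, one has $\ell(x,f^*)\le\ell(x,f)$; equivalently, for each fixed $f$, the almost-sure inequality $\ell(X,f^*)\le\ell(X,f)$ holds. Then I would simply take expectations over $X$ on both sides of this almost-sure inequality. By monotonicity of expectation,
\[
  L(f^*)=\E\big[\ell(X,f^*)\big]\le\E\big[\ell(X,f)\big]=L(f).
\]
Since $f$ was an arbitrary element of the class, $f^*$ attains the minimum of $L$, i.e.\ $f^*\in\argmin_{f}L(f)$, which is the claim.

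The hard part — really the only thing beyond one line — is the measurability and null-set bookkeeping hidden in the phrase ``$f^*\in\argmin_{f}\ell(X,f)$'': one must know that minimizing $\ell(x,\cdot)$ separately at each $x$ yields a function $f^*$ that is itself a measurable (hence admissible) element of $\mathcal F$, and that the possible exceptional $x$-set where pointwise optimality fails has $P_X$-measure zero and therefore does not affect the expectation. In every place this lemma is invoked in the paper — Lemmas~\ref{lma:1} and \ref{lma:2}, Theorems~\ref{thm:1}--\ref{thm:3} — the minimizer $f^*$ is exhibited in closed form (a conditional expectation such as $g^*$, or an explicit algebraic combination of nuisance functions such as the $h^*_j$, or $\Delta$ itself), so its measurability is immediate and this subtlety evaporates. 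I would therefore state the lemma for measurable $f$, carry out the two-line argument above, and remark that the closed-form candidates used downstream are manifestly measurable, so the conclusion applies to them.
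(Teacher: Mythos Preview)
Your proposal is correct and is essentially the same idea as the paper's proof: both rest on the pointwise inequality $\ell(X,f^*)\le\ell(X,f)$ followed by monotonicity of expectation. Your version is in fact slightly cleaner than the paper's, which unnecessarily introduces an auxiliary minimizer $f^+\in\argmin_f L(f)$ and then sandwiches $L(f^*)$ between $L(f^+)$ and $L(f^+)$; your direct argument avoids presupposing that $L$ has a minimizer.
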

\begin{proof}
    Denote $f^+\in\argmin_{f}L(f)$. Then by definition, we have the following two inequalities:
    \begin{align*}
        L(f^+)&\leq L(f^*)=\E\ell(X,f^*)\\
        L(f^*)=\E\ell(X,f^*)&\leq \E\ell(X,f^+)=L(f^+)
    \end{align*}
    Therefore, $L(f^*)=L(f^+)$ and $f^*\in \argmin_{f}L(f)$.
\end{proof}

\begin{proof}[Proof of Lemma \ref{lma:1}]
Let $\ell(X,f)=\mathbb{E}\left[\frac{1}{\pi_Z(Z,X)}\left(\frac{2YZ}{\delta(X)}-f(X)\right)^2\bigg|X\right]$. By Lemma \ref{lmaa1}, it suffices to show $\Delta\in\argmin_f\ell(X,f)$.

The gradient of $\ell(X,f)$ with respect to $f$ is given by
    \begin{align*}
        \frac{\partial }{\partial f} \ell(X,f)
	=~&\mathbb{E}\bigg[\frac{\partial }{\partial f}\frac{1}{\pi_Z(Z,X)}\bigg(\frac{2YZ}{\delta(X)}-f(X)\bigg)^2\bigg|X\bigg]\\
	=~&-2\mathbb{E}\bigg[\frac{1}{\pi_Z(Z,X)}\bigg(\frac{2YZ}{\delta(x)}-f(X)\bigg)\bigg|X\bigg]\\
	=~&2\mathbb{E}\bigg[\frac{f(X)}{\pi_Z(Z,X)}\bigg|X\bigg]-2\mathbb{E}\bigg[\frac{2YZ}{\delta(X)\pi_Z(Z,X)}\bigg\vert X\bigg]\\
	=~&4f(X)-4\Delta(X)
\end{align*}
Since $\ell(X,f)$ is convex, we have $\Delta\in\argmin_f\ell(X,f)$.
\end{proof}

\begin{proof}[Proof of Lemma \ref{lma:2}]
Let $\ell_g(f) =\mathbb{E}\left[\frac{1}{\pi_Z(Z,X)}\left(\frac{2(Y-g(X))Z}{\delta(X)}-f(X)\right)^2\bigg|X\right]$. By Lemma \ref{lmaa1}, it suffices to show that for any $g$, $\argmin_f\ell_g(X,f)=\argmin_f\ell(X,f)$. For any $g(x)$,
\begin{align*}
\ell_g(X,f) &=\mathbb{E}\bigg[\frac{1}{\pi_Z(Z,X)}\bigg(\frac{2YZ}{\delta(X)}-f(X)-\frac{2g(X)Z}{\delta(X)}\bigg)^2\bigg|X\bigg]\\
    &=\ell(X,f)
    +2\mathbb{E}\bigg[\frac{1}{\pi_Z(Z,X)}\bigg(\frac{2YZ}{\delta(X)}-f(X)\bigg)\bigg(\frac{2g(X)Z}{\delta(X)}\bigg)\bigg|X\bigg]\\
    &\quad+\mathbb{E}\bigg[\frac{1}{\pi_Z(Z,X)}\bigg(\frac{2g(X)Z}{\delta(X)}\bigg)^2\bigg|X\bigg]\\
    &=\ell(X,f)
    +8\frac{g(X)}{(\delta(X))^2}\mathbb{E}\bigg[\frac{Y}{\pi_Z(Z,X)}\bigg|X\bigg]
    -4\frac{g(X)f(X)}{\delta(X)}\mathbb{E}\bigg[\frac{Z}{\pi_Z(Z,X)}\bigg|X\bigg]\\
    &\quad+4\bigg[\frac{g(X)}{\delta(X)}\bigg]^2\mathbb{E}\bigg[\frac{1}{\pi_Z(Z,X)}\bigg|X\bigg]
\end{align*}

Here the second term and the fourth term don't depend on $f$, and the third term is 0 because $\mathbb{E}\left[\frac{Z}{\pi_Z(Z,X)}\Big|X=x\right]=0$. Therefore, $\argmin_fL_g(f)=\argmin_fL(f)$.
\end{proof}

\begin{proof}[Proof of Theorem \ref{thm:1}]
For any $g(x)$, the variance of the derivative of the weighted loss at $f=\Delta$ is given by
\begin{align*}&\Var\bigg(\frac{1}{\pi_Z(Z,X)}\bigg(\frac{2\big(Y-g(X)\big)Z}{\delta(X)}-\Delta(X)\bigg)\bigg)\\
=~&\E\bigg(\mathbb{E}\bigg[\frac{1}{(\pi_Z(Z,X))^2}\bigg(\frac{2\big(Y-g(X)\big)}{\delta(X)}-Z\Delta(X)\bigg)^2\bigg|X\bigg]\bigg)\\
\triangleq~& \E[S(X,g)]
\end{align*}

Set the gradient of $S$ with respect to $g$ equal to 0. Then for any $x\in\mathcal{X}$, we have
\begin{align*}
    0&=\mathbb{E}\bigg[-\frac{4}{\delta(x)}\frac{1}{(\pi_Z(Z,x))^2}\bigg(\frac{2(Y-g(x))}{\delta(x)}-Z\Delta(x)\bigg)\bigg|X=x\bigg]\\
	\frac{8g(x)}{(\delta(x))^2}\mathbb{E}\bigg[\frac{1}{(\pi_Z(Z,x))^2}\bigg|X=x\bigg]&=\frac{8}{(\delta(x))^2}\mathbb{E}\bigg[\frac{Y}{(\pi_Z(Z,x))^2}\bigg|X=x\bigg]-\frac{4\Delta(x)}{\delta(x)}\mathbb{E}\bigg[\frac{Z}{(\pi_Z(Z,x))^2}\bigg|X=x\bigg]\\
	2g(x)\mathbb{E}\bigg[\frac{1}{(\pi_Z(Z,x))^2}\bigg|X=x\bigg]&=\mathbb{E}\bigg[\frac{2Y}{(\pi_Z(Z,x))^2}\bigg|X=x\bigg]-\big(\mu_{1}^Y(x)-\mu_{-1}^Y(x)\big)\mathbb{E}\bigg[\frac{Z}{(\pi_Z(Z,x))^2}\bigg|X=x\bigg]\\
    2g(x)\bigg(\frac{1}{\pi_Z(1,x)}+\frac{1}{\pi_Z(-1,x)}\bigg)&=\frac{2\mu_{1}^Y(x)}{\pi_Z(1,x)}+\frac{2\mu_{-1}^Y(x)}{\pi_Z(-1,x)}-\big(\mu_{1}^Y(x)-\mu_{-1}^Y(x)\big)\bigg(\frac{1}{\pi_Z(1,x)}-\frac{1}{\pi_Z(-1,x)}\bigg)\\
        2g(x)\bigg(\frac{1}{\pi_Z(1,x)}+\frac{1}{\pi_Z(-1,x)}\bigg)&=\big(\mu_{1}^Y(x)+\mu_{-1}^Y(x)\big)\bigg(\frac{1}{\pi_Z(1,x)}+\frac{1}{\pi_Z(-1,x)}\bigg)\\
        g(x)&=\frac{1}{2}\big(\mu_{1}^Y(x)+\mu_{-1}^Y(x)\big)\\
        &=\frac{1}{2}\E\bigg[\frac{Y}{\pi_Z(Z,X)}\bigg | X=x\bigg]
        \end{align*}
Additionally, $S$ is convex since $\frac{\partial^2 S}{\partial g^2}=\frac{8}{(\delta(x))^2\pi_Z(1,x)\pi_Z(-1,x)}$. By Lemma \ref{lmaa1}, $g^*\in\argmin_g\E[S(X,g)]$.
\end{proof}

\begin{proof}[Proof of Theorem \ref{thm:2}]
Let $\tilde \ell_g(X,f)=\mathbb{E}\Big[\frac{1}{\tilde\pi_Z(Z,X)}\left(\frac{2(Y-\tilde g(X))Z}{\hat\delta(X)}-f(X)\right)^2\Big|X\Big]$. By Lemma \ref{lmaa1}, it suffices to show $\Delta\in\argmin_f\tilde \ell_g(X,f)$, if either $\tilde\pi_Z=\pi_Z$ almost surely or $\tilde g=g$ almost surely. The gradient of $\tilde \ell_g(x,f)$ with respect to $f$ is given by
\begin{align*}
    \frac{\partial \tilde \ell_g(x,f)}{\partial f}&=2\mathbb{E}\left[\frac{1}{\tilde\pi_Z(Z,X)}f(X)\bigg|X=x\right]-2\mathbb{E}\left[\frac{1}{\tilde\pi_Z(Z,X)}\frac{2(Y-\tilde g(X))Z}{\hat\delta(X)}\bigg|X=x\right]\\
    &=2f(x)\bigg(\frac{\pi_Z(1,x)}{\tilde\pi_Z(1,x)}+\frac{\pi_Z(-1,x)}{\tilde\pi_Z(-1,x)}\bigg)\\
    &\quad-\frac{4}{\hat\delta(x)}\bigg[\frac{\pi_Z(1,x)}{\tilde\pi_Z(1,x)}\Big(\mu^Y_1(x)-\tilde g(x)\Big)-\frac{\pi_Z(-1,x)}{\tilde\pi_Z(-1,x)}\Big(\mu^Y_{-1}(x)-\tilde g(x)\Big)\bigg]
\end{align*}
	If $\tilde\pi_Z=\pi_Z$ almost surely, then
	\begin{equation*}
	    \frac{\partial \tilde \ell_g(x,f)}{\partial f}=4f(x)-\frac{4}{\hat\delta(x)}\Big(\mu^Y_1(x)-\mu^Y_{-1}(x)\Big)=4\Big(f(x)-\Delta(X)\Big)
	\end{equation*}
	If $\tilde g=g$ almost surely, then $\tilde g(x)=[\mu^Y_1(x)+\mu^Y_{-1}(x)]/2$. Thus, we have
	$$\begin{aligned}
	\frac{\partial \tilde \ell_g(x,f)}{\partial f}
	&=2f(x)\left(\frac{\pi_Z(1,x)}{\tilde\pi_Z(1,x)}+\frac{\pi_Z(-1,x)}{\tilde\pi_Z(-1,x)}\right)\\
    &\quad-\frac{2}{\hat\delta(x)}\biggl(\frac{\pi_Z(1,X)}{\tilde\pi_Z(1,x)}\bigl(\mu^Y_1(x)-\mu^Y_{-1}(x)\bigr)\biggr)\\
    &\quad-\frac{2}{\hat\delta(x)}\biggl(\frac{\pi_Z(-1,x)}{\tilde\pi_Z(-1,x)}\bigl(\mu^Y_1(x)-\mu^Y_{-1}(x)\bigr)\biggr)\\
    &=2\left(f(x)-\Delta(x)\right)\left(\frac{\pi_Z(1,x)}{\tilde\pi_Z(1,x)}+\frac{\pi_Z(-1,x)}{\tilde\pi_Z(-1,x)}\right)\end{aligned}$$
Check that $\frac{\pi_Z(1,x)}{\tilde\pi_Z(1,x)}+\frac{\pi_Z(-1,x)}{\tilde\pi_Z(-1,x)}>0$. By convexity of $\tilde \ell_g(x,f)$, $\Delta\in\argmin_f\tilde \ell_g(X,f)$ in both cases.
\end{proof}

\begin{proof}[Proof of Theorem \ref{thm:3}]
    Let $\tilde \ell_h(X,f)=\Big[\frac{1}{\tilde\pi_Z(Z,X)}\left(\frac{2(Y-\tilde h(X ,A,Z))Z}{\tilde\delta(X )}-f(X )\right)^2\Big|X \Big]$. By convexity of $\tilde{\ell}_h$ and Lemma \ref{lmaa1}, it suffices to show that the gradient of $\tilde{\ell}_h(x,f)$ with respect to $f$ is 0 at $f=\Delta$ in all cases. The gradient is given by
\begin{align*}
    \frac{\partial \tilde \ell_h(x,f)}{\partial f}&=2\mathbb{E}\left[\frac{1}{\tilde\pi_Z(Z,X)}f(X )\bigg|X=x \right]-2\mathbb{E}\left[\frac{2Z}{\tilde\pi_Z(Z,X)}\frac{Y-\tilde h(X,A,Z)}{\tilde\delta(X )}\bigg|X=x \right]\\
    &=2f(x )\left(\frac{\pi_Z(1,x)}{\tilde\pi_Z(1,x)}+\frac{\pi_Z(-1,x)}{\tilde\pi_Z(-1,x)}\right)\\
    &\quad-\frac{4}{\tilde\delta(x )}\frac{\pi_Z(1,x)}{\tilde\pi_Z(1,x)}\left(\mu_{1}^Y(x )-\E[\tilde h(X,A,Z)|Z=1,X=x]\right)\\
    &\quad+\frac{4}{\tilde\delta(x )}\frac{\pi_Z(-1,x)}{\tilde\pi_Z(-1,x)}\left(\mu_{-1}^Y(x )-\E[\tilde h(X,A,Z)|Z=-1,X=x]\right)
\end{align*}
\begin{itemize}
    \item 
    If $\tilde\pi_Z=\pi_Z$ almost surely, then we have the unweighted difference $\E[\tilde h(X,A,Z)|Z=1,X=x]-\E[\tilde h(X,A,Z)|Z=-1,X=x]=0$ by Equation \eqref{eq:5}. The resulting gradient is
    \begin{align*}\frac{\partial \tilde \ell_h(x,f)}{\partial f}&=4f(x )-\frac{2(\mu^A_1(x )-\mu^A_{-1}(x ))}{\tilde\delta(x )}\Big(\Delta(x )-\tilde\Delta(x )\Big)-4\tilde\Delta(x )\\
    &=4\bigg[f(x )-\tilde\Delta(x )-\frac{\delta(x )}{\tilde\delta(x )}\Big(\Delta(x )-\tilde\Delta(x )\Big)\bigg]
    \end{align*}
    
    It will yield $4(f(x)-\Delta(x))$ if either $\tilde\Delta=\Delta$ or $\tilde\delta=\delta$ almost surely. \par
    
    \item 
    If $\tilde \mu_{1}^Y=\mu_{1}^Y$, $\tilde \mu_{1}^A=\mu_{1}^A$ and $\tilde\Delta=\Delta$ almost surely, and the choice of residualization function is $\tilde h_1$, then we have
    $$\begin{aligned}
	&\quad\frac{\partial \tilde \ell_h(x,f)}{\partial f}\\
	&=2f(x )\left(\frac{\pi_Z(1,x)}{\tilde\pi_Z(1,x)}+\frac{\pi_Z(-1,x)}{\tilde\pi_Z(-1,x)}\right)-\frac{\pi_Z(1,x)}{\tilde\pi_Z(1,x)}2\tilde\Delta(x )\\
    &\quad+\frac{4}{\tilde\delta(x )}\frac{\pi_Z(-1,x)}{\tilde\pi_Z(-1,x)}\left(\Delta(x )\delta(x )-[2\delta(x )-\tilde\delta(x )]\tilde\Delta(x )/2\right)\\
    &=2(f(x )-\tilde\Delta(x ))\left(\frac{\pi_Z(1,x)}{\tilde\pi_Z(1,x)}+\frac{\pi_Z(-1,x)}{\tilde\pi_Z(-1,x)}\right)+\frac{\pi_Z(-1,x)}{\tilde\pi_Z(-1,x)}\frac{\delta(x )}{\tilde\delta(x )}4(\Delta(x )-\tilde\Delta(x ))\\
    &=2(f(x )-\tilde\Delta(x ))\left(\frac{\pi_Z(1,x)}{\tilde\pi_Z(1,x)}+\frac{\pi_Z(-1,x)}{\tilde\pi_Z(-1,x)}\right)
    \end{aligned}$$
    
    \item 
    If $\tilde \mu_{-1}^Y=\mu_{-1}^Y$, $\tilde \mu_{-1}^A=\mu_{-1}^A$ and $\tilde\Delta=\Delta$ almost surely, and the choice of residualization function is $\tilde h_2$, then we have
    $$\begin{aligned}
	&\quad\frac{\partial \tilde \ell_h(x,f)}{\partial f}\\
	&=2f(x )\left(\frac{\pi_Z(1,x)}{\tilde\pi_Z(1,x)}+\frac{\pi_Z(-1,x)}{\tilde\pi_Z(-1,x)}\right)\\
    &\quad-\frac{4}{\tilde\delta(x )}\frac{\pi_Z(1,x)}{\tilde\pi_Z(1,x)}\left(\Delta(x )\delta(x )-[2\delta(x )-\tilde\delta(x )]\tilde\Delta(x )/2\right)\\
    &\quad+\frac{\pi_Z(-1,x)}{\tilde\pi_Z(-1,x)}2\tilde\Delta(x )\\
    &=2(f(x )-\tilde\Delta(x ))\left(\frac{\pi_Z(1,x)}{\tilde\pi_Z(1,x)}+\frac{\pi_Z(-1,x)}{\tilde\pi_Z(-1,x)}\right)-\frac{\pi_Z(1,x)}{\tilde\pi_Z(1,x)}\frac{\delta(x )}{\tilde\delta(x )}4(\Delta(x )-\tilde\Delta(x ))\\
    &=2(f(x )-\tilde\Delta(x ))\left(\frac{\pi_Z(1,x)}{\tilde\pi_Z(1,x)}+\frac{\pi_Z(-1,x)}{\tilde\pi_Z(-1,x)}\right)
    \end{aligned}$$
    
    \item 
    If $(\tilde \mu_{1}^Y+\tilde \mu_{-1}^Y)/2=(\mu_{1}^Y+\mu_{-1}^Y)/2$ and $(\tilde \mu_{1}^A+\tilde \mu_{-1}^A)/2=(\mu_{1}^A+\mu_{-1}^A)/2$ almost surely, and the choice of residualization function is $\tilde h_3$, then the gradient is
    $$\begin{aligned}
	\frac{\partial \tilde \ell_h(x,f)}{\partial f}
	&=2f(x )\left(\frac{\pi_Z(1,x)}{\tilde\pi_Z(1,x)}+\frac{\pi_Z(-1,x)}{\tilde\pi_Z(-1,x)}\right)\\
    &\quad-\frac{4}{\tilde\delta(x )}\frac{\pi_Z(1,x)}{\tilde\pi_Z(1,x)}\bigg(\frac{\Delta(x )\delta(x )}{2}-(\delta(x )-\tilde\delta(x ))\tilde\Delta(x )/2\bigg)\\
    &\quad+\frac{4}{\tilde\delta(x )}\frac{\pi_Z(-1,x)}{\tilde\pi_Z(-1,x)}\bigg(-\frac{\Delta(x )\delta(x )}{2}-(-\delta(x )+\tilde\delta(x ))\tilde\Delta(x )/2\bigg)\\
    &=2\bigg(f(x )-\tilde\Delta(x )-\frac{\delta(x )}{\tilde\delta(x )}(\Delta(x )-\tilde\Delta(x ))\bigg)\left(\frac{\pi_Z(1,x)}{\tilde\pi_Z(1,x)}+\frac{\pi_Z(-1,x)}{\tilde\pi_Z(-1,x)}\right)\\
    \end{aligned}$$
    It will yield $2\Big(f(x )-\Delta(x ))\Big)\left(\frac{\pi_Z(1,x)}{\tilde\pi_Z(1,x)}+\frac{\pi_Z(-1,x)}{\tilde\pi_Z(-1,x)}\right)$ if either $\tilde\Delta=\Delta$ or $\tilde\delta=\delta$ almost surely.
\end{itemize}
\end{proof}

\section{Optimal residualization (linear model example)}
\label{app:B}
Consider linear model for $\Delta(x )$ with coefficients denoted by $\beta$. The objective function with outcome residualized by a function $g(x )$ is defined as follows:
\begin{equation*}
    L_g(y,z,x ,\beta)=\frac{1}{\pi_Z(z,x )}\bigg(\frac{2(y-g(x ))z}{\delta(x )}-x ^T\beta\bigg)^2
\end{equation*}
Let $\beta^*$ be the unique minimizer of $Q(\beta)\triangleq\E[L_g(Y,Z,X,\beta)]$. Let $\ell_g(y,z,x ,\beta)$ be the derivative of $L_g(y,z,x ,\beta)$ with respect to $\beta$. Denote by $\hat\beta$ the root of the estimating equation
$n^{-1}\sum_{i=1}^n\ell_g(Y_i,X_i,\beta)=0$. By Bahadur representation, we have
\begin{equation*}
    \hat\beta-\beta^*=n^{-1}H^{-1}\sum_{i=1}^n\ell_g(Y_i,X_i,\beta^*)+o_P(n^{-1})
\end{equation*}
where $H$ is the second derivative of $Q(\beta)$ with respect to $\beta$ at $\beta=\beta^*$. Therefore, selecting the optimal $g$ is equivalent to minimizing the variance of $\ell_g(Y_i,X_i,\beta^*)$.

\section{Technical details for IV-RDL2}
\label{app:C}
Unlike IV-RDL1, we need additional constraints on $h(x,a,z)$ to make sure the estimation for CATE remains consistent after the residualization. 

\begin{lemma}
    \label{lma:3}
    For any measurable $h :(\Xc,\Ac,\Zc) \to \R$ satisfying
    \begin{equation}
    \label{eq:5}
    \mathbb{E}\left[\frac{Zh(X,A,Z)}{\pi_Z(Z,X)}\bigg|X=x\right]=0
    \end{equation}
    or equivalently, $\mathbb{E}[h(X,A,Z)|Z=1,X=x]=\mathbb{E}[h(X,A,Z)|Z=-1,X=x]$, we have
    \begin{align*}
        \Delta
        \in~&\argmin_{f}\mathbb{E}\left[\frac{1}{\pi_Z(Z,X)}\left(\frac{2(Y-h(X,A,Z))Z}{\delta(X)}-f(X)\right)^2\right].
    \end{align*}
\end{lemma}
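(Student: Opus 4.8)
The plan is to follow the template of the proof of Lemma \ref{lma:2}, treating the residualizing function $h$ as a perturbation of the unresidualized weighted loss and showing that the extra terms it introduces are either free of $f$ or vanish under the constraint \eqref{eq:5}. First I would invoke Lemma \ref{lmaa1}: setting $\ell_h(X,f)=\mathbb{E}\!\left[\frac{1}{\pi_Z(Z,X)}\left(\frac{2(Y-h(X,A,Z))Z}{\delta(X)}-f(X)\right)^{2}\Big|X\right]$, it suffices to prove $\Delta\in\argmin_f\ell_h(X,f)$ pointwise in $x$, since the unconditional objective is minimized wherever the conditional one is. Before that I would record the equivalence of the two forms of the hypothesis on $h$: expanding $\mathbb{E}[Zh(X,A,Z)/\pi_Z(Z,X)\mid X=x]$ over $Z=\pm 1$ and cancelling the propensity factors (exactly as in the proof of Proposition \ref{prop:1}) gives $\mathbb{E}[h(X,A,Z)\mid Z=1,X=x]-\mathbb{E}[h(X,A,Z)\mid Z=-1,X=x]$, so the two stated conditions are the same.

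Next I would expand the square around the unresidualized integrand $2YZ/\delta(X)-f(X)$, writing $\ell_h(X,f)=\ell(X,f)+(\text{cross term})+(\text{pure-}h\text{ term})$, where $\ell(X,f)=\mathbb{E}[\pi_Z(Z,X)^{-1}(2YZ/\delta(X)-f(X))^{2}\mid X]$ is precisely the conditional objective appearing in the proof of Lemma \ref{lma:1}. The pure-$h$ term $\mathbb{E}[\pi_Z(Z,X)^{-1}(2h(X,A,Z)Z/\delta(X))^{2}\mid X]$ manifestly does not involve $f$. The cross term equals $-4\,\mathbb{E}\!\left[\frac{1}{\pi_Z(Z,X)}\left(\frac{2YZ}{\delta(X)}-f(X)\right)\frac{h(X,A,Z)Z}{\delta(X)}\Big|X\right]$; using $Z^{2}=1$ it splits into $-8\delta(X)^{-2}\mathbb{E}[Yh(X,A,Z)/\pi_Z(Z,X)\mid X]$, which is free of $f$, plus $4\delta(X)^{-1}f(X)\,\mathbb{E}[Zh(X,A,Z)/\pi_Z(Z,X)\mid X]$, which is identically zero by hypothesis \eqref{eq:5}. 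Hence $\ell_h(X,f)$ and $\ell(X,f)$ differ by a quantity not depending on $f$, so they have the same minimizers, and by Lemma \ref{lma:1} (equivalently its gradient computation) $\Delta\in\argmin_f\ell(X,f)=\argmin_f\ell_h(X,f)$; a final appeal to Lemma \ref{lmaa1} lifts this to the unconditional claim.

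The only delicate point is the cross term: because $h$ depends on $(A,Z)$ and not on $X$ alone, it cannot be pulled outside the conditional expectation the way $g(X)$ was in the proof of Lemma \ref{lma:2}, so one genuinely needs both ingredients — the identity $Z^{2}=1$ to neutralize the $Y$-contribution, and the constraint \eqref{eq:5} to annihilate the $f$-linear contribution. This is exactly why the extra hypothesis on $h$ is imposed; without it the cross term would leave a nonzero $f$-linear piece and shift the minimizer away from $\Delta$. Everything else is routine algebra together with the interchange of differentiation and expectation already justified in the preceding proofs, and as there the convexity of $\ell_h(X,f)$ in $f$ guarantees the stationary point is the global minimizer.
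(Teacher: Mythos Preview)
Your proposal is correct and follows essentially the same route as the paper's proof: both reduce to the conditional objective via Lemma \ref{lmaa1}, expand the square around the unresidualized loss $\ell(X,f)$, and observe that the pure-$h$ term and the $Y$-part of the cross term are free of $f$ while the $f$-linear piece of the cross term vanishes by hypothesis \eqref{eq:5}. Your write-up additionally spells out the equivalence of the two stated forms of the constraint on $h$, which the paper asserts without proof; otherwise the arguments are identical.
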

\begin{proof}[Proof of Lemma \ref{lma:3}]
Let $\ell_h(X,f)= \mathbb{E}\left[\frac{1}{\pi_Z(Z,X)}\left(\frac{2(Y-h(X,A,Z))Z}{\delta(X)}-f(X)\right)^2\bigg|X\right]$. By Lemma \ref{lmaa1}, it suffices to show $\argmin_f\ell_h(X,f)=\argmin_f\ell(X,f)$. For any $h(x,a,z)$ satisfying Equation \eqref{eq:5}, we have
\begin{align*}
    \ell_h(X,f)&=\mathbb{E}\left[\frac{1}{\pi_Z(Z,X)}\left(\frac{2YZ}{\delta(X)}-f(X)-\frac{2h(X,A,Z)Z}{\delta(X)}\right)^2\bigg|X\right]\\
    &=\ell(X,f)+2\mathbb{E}\left[\frac{1}{\pi_Z(Z,X)}\left(\frac{2YZ}{\delta(X)}-f(X)\right)\left(\frac{2h(X,A,Z)Z}{\delta(X)}\right)\bigg|X\right]\\
    &\quad+\mathbb{E}\left[\frac{1}{\pi_Z(Z,X)}\left(\frac{2h(X,A,Z)Z}{\delta(X)}\right)^2\bigg|X\right]\\
    &=\ell(X,f)+\frac{8}{(\delta(X))^2}\mathbb{E}\left[\frac{Yh(X,A,Z)}{\pi_Z(Z,X)}\bigg|X\right]
    -\frac{4f(X)}{\delta(X)}\mathbb{E}\left[\frac{Zh(X,A,Z)}{\pi_Z(Z,X)}\bigg|X\right]\\
    &\quad+\frac{4}{(\delta(X))^2}\mathbb{E}\left[\frac{(h(X,A,Z))^2}{\pi_Z(Z,X)}\bigg|X\right]
\end{align*}
Here the second term and the fourth term don’t depend on $f$, and the third term is 0. Therefore, $\argmin_f \ell_h(X,f) = \argmin_f \ell(X,f)$
\end{proof}
As shown in Lemma \ref{lma:3}, the minimizer is invariant of a shift on outcome by a function $h$ that satisfies Eq. \eqref{eq:5}. Similar to the way of finding $\hat g^*$, we would like to find the function $h$ with the smallest variance of $[\pi_Z(Z,X)]^{-1}[2(Y-h(X,A,Z))Z\delta^{-1}(X)-f(X)]$ among all $h$ that satisfies Eq. \eqref{eq:5}.

\begin{theorem}
    \label{thm:a1}
    Among all measurable $h :(\Xc,A,Z) \to \R$ satisfying Eq. \eqref{eq:5}, the following function minimizes $\Var\left[\frac{1}{\pi_Z(Z,X)}\left(\frac{2(Y-h(X,A,Z))Z}{\delta(X)}-f(X)\right)\right]$:
    \begin{equation*}
        h^*(x,a,z)=\mu^Y(x)+\frac{\Delta(x)}{2}\big(a-\mu^A(x)-z\delta(x)\big)
    \end{equation*}
    if the conditional means $\mu^Y(x)$ and $\mu^A(x)$ is one of these three pairs: (1) $\mu_1^Y(x)$ and $\mu_1^A(x)$; (2) $\mu_{-1}^Y(x)$ and $\mu_{-1}^A(x)$; (3) $m^Y(x)\triangleq(\mu_1^Y(x)+\mu_{-1}^Y(x))/2$ and $m^A(x)\triangleq(\mu_1^A(x)+\mu_{-1}^A(x))/2$.
\end{theorem}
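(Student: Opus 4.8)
The plan is to mirror the proof of Theorem~\ref{thm:1}, disposing of the constraint~\eqref{eq:5} along the way and carrying out the optimization inside the augmentation family $h(x,a,z)=\mu^Y(x)+\tfrac{\Delta(x)}{2}(a-\mu^A(x)-z\delta(x))$ from which the IV-RDL2 residualization is built. \emph{Step 1 (reduce to a pointwise problem):} by the computation in the proof of Lemma~\ref{lma:3}, the constraint~\eqref{eq:5} cancels the cross term and gives $\mathbb{E}\big[\tfrac{1}{\pi_Z(Z,X)}\big(\tfrac{2(Y-h(X,A,Z))Z}{\delta(X)}-f(X)\big)\mid X\big]=2\Delta(X)-2f(X)$, so at $f=\Delta$ this conditional mean, hence the unconditional mean, is $0$ and the variance to minimize equals $\mathbb{E}[S(X,h)]$ with
\[
S(x,h)\;\triangleq\;\frac{4}{\delta(x)^2}\,\mathbb{E}\!\left[\frac{1}{\pi_Z(Z,x)^2}\Big(Y-h(x,A,Z)-\tfrac{Z\,\delta(x)\Delta(x)}{2}\Big)^{2}\;\middle|\;X=x\right].
\]
Since~\eqref{eq:5} is imposed conditionally on $X=x$, Lemma~\ref{lmaa1} lets me minimize $S(x,h)$ over $h(x,\cdot,\cdot)$ for each fixed $x$, subject to $\mathbb{E}[h(x,A,1)\mid Z=1,x]=\mathbb{E}[h(x,A,-1)\mid Z=-1,x]$.

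\emph{Step 2 (a constraint-free parametrization):} I would note that any $h$ of the augmentation form automatically satisfies~\eqref{eq:5}, since $\mathbb{E}[h(x,A,z)\mid Z=z,x]=\mu^Y(x)+\tfrac{\Delta(x)}{2}(\mu^A_z(x)-\mu^A(x)-z\delta(x))$ and its $z=1$ and $z=-1$ values coincide because $\mu^A_1(x)-\mu^A_{-1}(x)=2\delta(x)$ by the definition of $\delta$. Substituting such an $h$ into $S$, the explicit $z$-terms cancel and $Y-h(x,A,Z)-\tfrac{Z\delta(x)\Delta(x)}{2}=Y-\psi(x)-\tfrac{\Delta(x)}{2}A$ with $\psi(x)\triangleq\mu^Y(x)-\tfrac{\Delta(x)}{2}\mu^A(x)$, so $h$ enters $S$ only through the scalar $\psi(x)$ and $S(x,h)$ is a convex quadratic in $\psi(x)$ with positive leading coefficient $\tfrac{4}{\delta(x)^2}\sum_{z=\pm1}\pi_Z(z,x)^{-1}$. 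Setting the derivative in $\psi(x)$ to zero gives
\[
\psi(x)=\Big({\textstyle\sum_{z=\pm1}}\pi_Z(z,x)^{-1}\Big)^{-1}\sum_{z=\pm1}\pi_Z(z,x)^{-1}\Big(\mu^Y_z(x)-\tfrac{\Delta(x)}{2}\mu^A_z(x)\Big).
\]

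\emph{Step 3 (identify $h^*$ and the three-way equivalence):} I would invoke Proposition~\ref{prop:1} (equivalently Eq.~\eqref{eq:1}), which gives $\mu^Y_1(x)-\mu^Y_{-1}(x)=\delta(x)\Delta(x)$; combined with $\mu^A_1(x)-\mu^A_{-1}(x)=2\delta(x)$ this shows $\mu^Y_z(x)-\tfrac{\Delta(x)}{2}\mu^A_z(x)$ is independent of $z$ and equals $m^Y(x)-\tfrac{\Delta(x)}{2}m^A(x)$. Therefore the weighted average in Step~2 collapses to this common value, so the variance-minimizing residualization is $h^*(x,a,z)=\mu^Y(x)+\tfrac{\Delta(x)}{2}(a-\mu^A(x)-z\delta(x))$ for each of the pairs $(\mu^Y,\mu^A)\in\{(\mu^Y_1,\mu^A_1),(\mu^Y_{-1},\mu^A_{-1}),(m^Y,m^A)\}$, all three giving the same $\psi(x)$ and hence the same function $h^*$. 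Convexity of $S(x,h)$ turns this stationary point into a minimizer, and Lemma~\ref{lmaa1} returns the assertion for $\mathbb{E}[S(X,h)]=\Var(\cdot)$.

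The step I expect to be most delicate is the bookkeeping around the constraint. The minimizer is genuinely non-unique --- perturbing $h$ within the kernel of the conditional-mean maps on the two $Z$-slices does not change $S$ --- so the argument has to be run at the level of the scalar $\psi(x)$ rather than an individual $h$, and one must check that the constraint was used precisely to kill the term $-\tfrac{4f(x)}{\delta(x)}\,\mathbb{E}\big[\tfrac{Zh(x,A,Z)}{\pi_Z(Z,x)}\mid X=x\big]$, which would otherwise leave residual dependence on $f$. Both the cancellation of the $z$-terms in $S$ and the interchangeability of the three conditional-mean pairs rely on the same identities $\mu^Y_1-\mu^Y_{-1}=\delta\Delta$ and $\mu^A_1-\mu^A_{-1}=2\delta$, so the remaining computation is short once these are in hand.
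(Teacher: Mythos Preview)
Your approach differs from the paper's in one important respect: you restrict from the outset to the one-parameter augmentation family $h(x,a,z)=\psi(x)+\tfrac{\Delta(x)}{2}\big(a-z\delta(x)\big)$ and optimize only over the scalar $\psi(x)$. The theorem, however, claims optimality among \emph{all} measurable $h$ satisfying \eqref{eq:5}, and your Steps~2--3 do not explain why no $h$ outside this family can beat $h^*$. The paper handles this differently: it differentiates $S(x,z,h)$ in $h$ without any a~priori parametric restriction, obtains the first-order condition
\[
\E[h(X,A,Z)\mid Z=z,X=x]\;=\;\mu_z^Y(x)-\tfrac{z}{2}\,\delta(x)\Delta(x),
\]
observes that this condition already forces \eqref{eq:5} (so the constrained and unconstrained problems share stationary points), and only then verifies that each of the three $h^*$ meets it. Your optimal $\psi(x)=\mu_z^Y(x)-\tfrac{\Delta(x)}{2}\mu_z^A(x)$ is exactly the common value of $\E[h^*\mid Z=z,X]$ dictated by that condition, so you recover the same object; but by restricting first and optimizing second you leave the ``among all $h$'' quantifier unaddressed. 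The fix is short: derive the first-order condition at the level of $\E[h\mid Z=z,X=x]$ for arbitrary $h$ (as the paper does), and then exhibit your augmentation-family $h^*$ as a solution.

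One side remark: your claim that ``perturbing $h$ within the kernel of the conditional-mean maps on the two $Z$-slices does not change $S$'' is not correct. Since $S(x,h)$ contains $\E[(Y-h)^2/\pi_Z^2\mid X]$, a perturbation $\epsilon(x,A,z)$ with $\E[\epsilon\mid Z=z,X]=0$ still contributes a strictly positive quadratic term $\E[\epsilon^2/\pi_Z^2\mid X]$ and a cross term with $Y-h$. The non-uniqueness you have in mind lives at the level of the first-order condition (many $h$ share the required conditional means $\E[h\mid Z=z,X]$), not at the level of $S$ itself.
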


\begin{proof}[Proof of Theorem \ref{thm:a1}]
For any $h(x,a,z)$ satisfying Equation \eqref{eq:5}, the variance of the derivative of the weighted loss $L_h(f)$ at $f=\Delta$ is given by
\begin{align*}
~&\Var\bigg(\frac{1}{\pi_Z(Z,X)}\bigg(\frac{2\big(Y-h(X,A,Z)\big)Z}{\delta(X)}-\Delta(X)\bigg)\bigg)\\
=~&\E\bigg(\mathbb{E}\bigg[\frac{1}{(\pi_Z(Z,X))^2}\bigg(\frac{2\big(Y-h(X,A,Z)\big)}{\delta(X)}-Z\Delta(X)\bigg)^2\bigg|Z,X\bigg]\bigg)\\
\triangleq~& \E[S(X,Z,h)]
\end{align*}
where \begin{equation*}
    S(x,z,h) =\mathbb{E}\bigg[\frac{1}{(\pi_Z(Z,X))^2}\bigg(\frac{2\big(Y-h(X,A,Z)\big)}{\delta(X)}-Z\Delta(X)\bigg)^2\bigg|Z=z,X=x\bigg]
\end{equation*}
Now we seek to minimize $\E[S(X,Z,h)]$. By convexity of $S(X,Z,h)$ and Lemma \ref{lmaa1}, it suffices to show that the gradient of $S(X,Z,h)$ with respect to $h$ is 0 at $f=\Delta$, if $h$ is one of the three equivalent forms. To this end, 
set the gradient of $S$ with respect to $h$ to be 0. Then for any $(x,z)\in(\mathcal{X},\mathcal{Z})$, we have
\begin{equation*}
    \mathbb{E}\left[-\frac{4}{\delta(X)}\frac{1}{(\pi_Z(Z,X))^2}\left(\frac{2(Y-h(X,A,Z))}{\delta(X)}-Z\Delta(X)\right)\bigg|Z=z,X=x\right]=0,
\end{equation*}
which leads to the following condition on the optimal $h$:
\begin{equation}
    \mathbb{E}[h(X,A,Z)|Z=z,X=x]=\mathbb{E}[Y|Z=z,X=x]-z\Delta(x)\delta(x)/2\numberthis\label{eq:A1} 
\end{equation}
Since $\delta(x)\Delta(x)=\mu_{1}^Y(x)-\mu_{-1}^Y(x)$, it can be verified that Equation \eqref{eq:A1} implies 
$\mathbb{E}[h(X,A,Z)|Z=1,X=x]=\mathbb{E}[h(X,A,Z)|Z=-1,X=x]$, which is equivalent to Equation \eqref{eq:5}. We will then verify that the following three equivalent functions satisfy Equation \eqref{eq:A1}.
\begin{align*}
        h_1^*(x,a,z)&=\mu_{1}^Y(x)+\frac{\Delta(x)}{2}\big(a-\mu_{1}^A(x)-z\delta(x)\big)\\
        h_2^*(x,a,z)&=\mu_{-1}^Y(x)+\frac{\Delta(x)}{2}\big(a-\mu_{-1}^A(x)-z\delta(x)\big)\\
        h_3^*(x,a,z)&=\frac{\mu_{1}^Y(x)+\mu_{-1}^Y(x)}{2}+\frac{\Delta(x)}{2}\bigg(a-\frac{\mu_{1}^A(x)+\mu_{-1}^A(x)}{2}-z\delta(x)\bigg)
    \end{align*}
To see their equivalence, notice that $\Delta(x)=2[\mu_{1}^Y(x)-\mu_{-1}^Y(x)]/[\mu_{1}^A(x)-\mu_{-1}^A(x)]$. Then we have
\begin{equation*}
    \mu_{-1}^Y(x)-\mu_{-1}^A(x)\Delta(x)/2=\frac{\mu_{-1}^Y(x)\mu_{1}^A(x)-\mu_{1}^Y(x)\mu_{-1}^A(x)}{\mu_{1}^A(x)-\mu_{-1}^A(x)}=\mu_{1}^Y(x)-\mu_{1}^A(x)\Delta(x)/2,
\end{equation*}
and $h_3^*$ is simply the average of $h_1^*$ and $h_2^*$. It suffices to show $h_1^*$ satisfies \eqref{eq:A1}. We have
\begin{align*}
\E[h_1^*(X,A,Z)|Z=1,X=x]&=\mu_{1}^Y(x)-\frac{\Delta(x)}{2}\delta(x)\\
\E[h_1^*(X,A,Z)|Z=-1,X=x]&=\mu_{1}^Y(x)+\frac{\Delta(x)}{2}\big(\mu_{-1}^A(x)-\mu_{1}^A(x)+\delta(x)\big)\\
&=\mu_{-1}^Y(x)+\frac{\Delta(x)}{2}\delta(x),\end{align*}
which completes the proof.
\end{proof}

\section{Additional Simulations}
\label{app:D}
In this section, we conducted simulations that evaluate the performance of the proposed framework against model mispecification on the nuisance variables. The data is generated by the same model as Setting 1 in Section \ref{sec:6}, except that $\pi_Z(1,X)=\text{expit}\{2X_1\}$. Based on the true model, the conditional mean outcome is non-linear on $X$. However, in Setting 3, we will use its OLS estimate as a case of misspecification. In Setting 4, we deliberately used a wrong propensity $\hat\pi_Z(1,x)=1/2$. We keep all the other procedures the same as Setting 1. The results are summarized in Table \ref{tab:3}. We can observe that the residualized version have superior performance, and have significant lower MSE compared to the original version.

\begin{table}[h]
\label{tab:3}
\begin{center}
\footnotesize
\caption{Simulation results: mean$\times 10^{-2}$(SE$\times10^{-2}$). IPW-MR: the multiply robust weighted learning; BART: Bayesian additive regression trees; RD: robust direct learning; CF: causal forest. The empirical maximum value is 0.967 for setting 3 and 0.979 for setting 4.}
\begin{tabular}{l|l||r|r||r|r|r||r|r|r}
\hline
                   &       & BART      & RD        & IPW-MR           & CF        & MRIV       & IV-DL     & IV-RDL1            & IV-RDL2         \\ \hline
\multirow{3}{*}{3} &MSE   & $136_{(4.3)}$  & $93_{(5.7)}$   & NA            & $96.6_{(1.9)}$ & $552_{(86)}$ & $194_{(14)}$ & $\textbf{80}_{(7.0)}$     & $\underline{81.8}_{(6.5)}$ \\
                   & AR    & $63.4_{(0.9)}$ & $76.2_{(0.8)}$ & $79.5_{(0.7)}$     & $76.9_{(0.8)}$ & $77.1_{(0.7)}$ & $80.1_{(0.6)}$ & $\textbf{84.5}_{(0.6)}$ & $\underline{ 82.8}_{(0.7)}$ \\
                   & Value & $73.7_{(1.2)}$ & $85.5_{(0.9)}$ & $89.4_{(0.8)}$     & $85.6_{(0.8)}$ & $85.6_{(0.7)}$ & $89.3_{(0.5)}$ & $\textbf{93}_{(0.4)}$   & $\underline{91.6}_{(0.6)}$ \\ \hline
\multirow{3}{*}{4} & MSE   & $137_{(4.3)}$  & $86.2_{(2.6)}$ & NA            & $96.7_{(1.9)}$ & $79.5_{(2.4)}$ & $103_{(5)}$    & $\textbf{44.1}_{(3.1)}$ & $\underline{60.7}_{(3)}$   \\
                   & AR    & $63.5_{(0.9)}$ & $72.1_{(0.6)}$ & $\underline{ 78.6}_{(1.0)}$ & $77_{(0.8)}$   & $75.6_{(0.6)}$ & $74.1_{(0.8)}$ & $\textbf{83.6}_{(0.8)}$ & $\underline{78.5}_{(0.9)}$ \\
                   & Value & $74.5_{(1.2)}$ & $84.9_{(0.6)}$ & $87.3_{(0.8)}$       & $86.3_{(0.8)}$ & $84.8_{(0.7)}$ & $83.5_{(0.9)}$ & $\textbf{92.6}_{(0.7)}$ & $\underline{87.9}_{(0.9)}$ \\ \hline
\end{tabular}
\end{center}
\end{table}

\section{3D plots for the data analysis}
\label{app:E}
In the data analysis, we construct a 3-dimensional plot for the estimated CATE based on the three splitting variables (age of mom at census, age of mom at first birth, and income of father). The plot is presented in two rotations in Figure \ref{fig:3}. The points in the plots are color-coded by the estimated CATE with red indicating more likely to work and blue indicating more likely to not work. We can see that, overall, blue points are at the bottom of the plots, with a majority of them below \$25k/year. Subgroup 3 of young mothers with extremely low fathers' income only accounts for 3\% of the data and hence is hard to see here.

\begin{figure}[!ht]
      \begin{minipage}{0.49\linewidth}
              \centering
    \includegraphics[width = \linewidth]{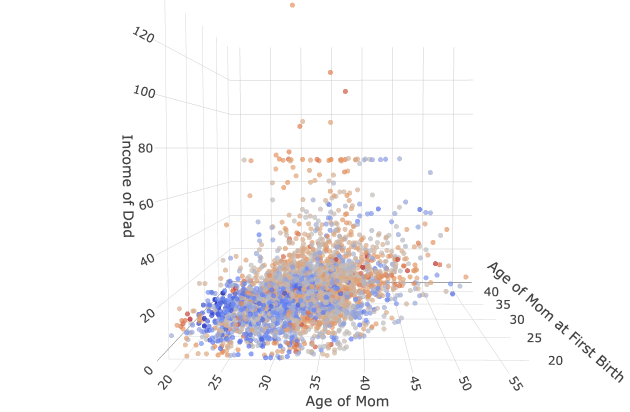}
      \end{minipage}
      \begin{minipage}{0.49\linewidth}
              \centering
    \includegraphics[width = \linewidth]{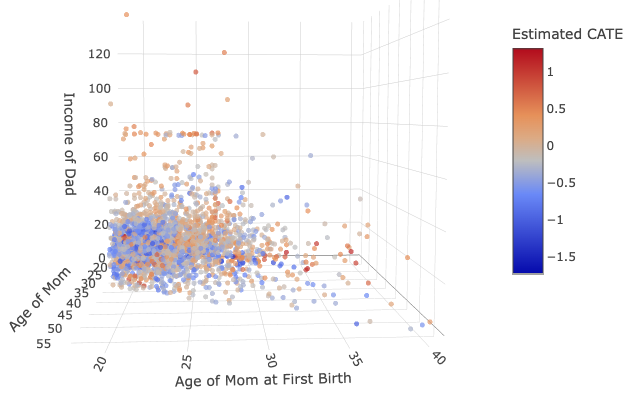}
      \end{minipage}
      \caption{3D scatter plots of three covariates colored by estimated CATE for 3000 randomly selected subjects. Both plots reflect different rotations.}
      \label{fig:3}
\end{figure}


\end{document}